\newcommand\numberthis{\addtocounter{equation}{1}\tag{\theequation}}
\theoremstyle{plain}
\newtheorem*{theorem*}{Theorem}
\newtheorem{theorem}{Theorem}
\newtheorem{lemma}[theorem]{Lemma}
\theoremstyle{definition}
\newtheorem{definition}{Definition}  
\newtheorem*{definition*}{Definition}        
\definecolor{blue}{HTML}{03045E}
\definecolor{green}{HTML}{6a994e}
\definecolor{lightblue}{HTML}{134074}
\newcommand{\secref}[1]{\hyperref[#1]{Sec.~\ref{#1}}}
\renewcommand{\eqref}[1]{\hyperref[#1]{Eq.\,(\ref{#1})}}
\newcommand{\figref}[1]{\hyperref[#1]{Fig.~\ref{#1}}}
\newcommand{\figaref}[1]{\hyperref[#1]{Fig.~\ref{#1}(a)}}
\newcommand{\figbref}[1]{\hyperref[#1]{Fig.~\ref{#1}(b)}}
\newcommand{\figcref}[1]{\hyperref[#1]{Fig.~\ref{#1}(c)}}
\newcommand{\figdref}[1]{\hyperref[#1]{Fig.~\ref{#1}(d)}}
\titlespacing\section{2pt}{12pt plus 0pt minus 0pt}{5pt plus 0pt minus 0pt}
\titlespacing\subsection{2pt}{5pt plus 0pt minus 0pt}{3pt plus 0pt minus 0pt}
\titlespacing\subsubsection{2pt}{5pt plus 0pt minus 0pt}{3pt plus 0pt minus 0pt}
\let\@afterindenttrue\@afterindentfalse
\begin{document}
\title{\textsf{Achieving the Heisenberg limit using fault-tolerant quantum error correction}}

\author{Himanshu Sahu}\email{hsahu@pitp.ca}
\affiliation{Perimeter Institute for Theoretical Physics, Waterloo, Ontario N2L 2Y5, Canada}
\affiliation{Department of Physics and Astronomy and Institute for Quantum Computing, University of Waterloo, Ontario N2L 3G1, Canada}

\author{Qian Xu} 
\affiliation{Institute for Quantum Information and Matter, Caltech, Pasadena, CA 91125, USA}
\affiliation{Walter Burke Institute for Theoretical Physics, Caltech, Pasadena, CA 91125, USA}

\author{Sisi Zhou}\email{sisi.zhou26@gmail.com}
\affiliation{Perimeter Institute for Theoretical Physics, Waterloo, Ontario N2L 2Y5, Canada}
\affiliation{Department of Physics and Astronomy and Institute for Quantum Computing, University of Waterloo, Ontario N2L 3G1, Canada}
\affiliation{Department of Applied Mathematics, University of Waterloo, Ontario N2L 3G1, Canada}

\begin{abstract}
    Quantum effect enables enhanced estimation precision in metrology, with the Heisenberg limit (HL) representing the ultimate limit allowed by quantum mechanics. Although the HL is generally unattainable in the presence of noise, quantum error correction (QEC) can recover the HL in various scenarios. A notable example is estimating a Pauli-$Z$ signal under bit-flip noise using the repetition code, which is both optimal for metrology and robust against noise. However, previous protocols often assume noise affects only the signal accumulation step, while the QEC operations---including state preparation and measurement---are noiseless. To overcome this limitation, we study fault-tolerant quantum metrology where all qubit operations are subject to noise. We focus on estimating a Pauli-$Z$ signal under bit-flip noise, together with state preparation and measurement errors in all QEC operations. We propose a fault-tolerant metrological protocol where a repetition code is prepared via repeated syndrome measurements, followed by a fault-tolerant logical measurement. We demonstrate the existence of an error threshold, below which errors are effectively suppressed and the HL is attained. 

\vspace{2em}
\end{abstract}

\maketitle

\vspace{1cm}

\section{Introduction}\label{sec:introduction}

Quantum metrology~\cite{doi:10.1126/science.1104149,giovannetti2011advances,PhysRevLett.96.010401} deals with the task of parameter estimation using quantum resources such as entanglement to achieve higher statistical precision than purely classical approaches. The process typically involves three stages: the preparation of a probe, allowing it to interact with the signal that encodes the parameter of interest, and finally performing a measurement of the probe. The resulting measurement outcomes are then used to estimate the parameter. Quantum metrology has found wide-ranging applications, including gravitational-wave detection~\cite{PhysRevD.23.1693,PhysRevA.33.4033,PhysRevA.88.041802,ligo_2011,Aasi_2013}, frequency spectroscopy~\cite{PhysRevA.46.R6797}, atomic clocks~\cite{Katori2011,PhysRevA.72.042301,10.1063/1.1797561}, and other high-precision measurements~\cite{doi:10.1126/science.1170730,Mauranyapin2017}.

Quantum mechanics sets ultimate limits in precision through the `Heisenberg limit' (HL), which scales as $n^{-1}$~\cite{PhysRevLett.96.010401,Leibfried2004}, where $n$ denotes the total number of probes used in an experiment. In principle, the HL is attainable, for instance, using the Greenberger-Horne-Zeilinger (GHZ) state in atomic systems or the NOON state in photonic systems. In practice, however, environmental decoherence substantially reduces this quantum enhancement~\cite{Sekatski2017quantummetrology,PhysRevX.7.041009,Zhou2018,Fujiwara_2008,Escher2011,4655455,Demkowicz-Dobrzański2012,PhysRevLett.113.250801,Kołodyński_2013}, restricting precision to $n^{-1/2}$, referred to as the `standard quantum limit' (SQL) or `shot noise' in quantum optics. Over the past few decades, significant progress has been made in identifying strategies to retain quantum enhancement under noise~\cite{Zhou2018,PhysRevX.7.041009}, employing a variety of theoretical and experimental tools.

Quantum error correction (QEC) is a powerful tool for establishing bounds on achievable precision in quantum metrology~\cite{PhysRevLett.112.150802,PhysRevLett.112.150801,PhysRevLett.112.080801,ozeri2013heisenberglimitedmetrologyusing,unden2016quantum,Lu2015,PhysRevA.95.032303,Zhou2018}. QEC codes store the logical information in an entangled state of many physical qubits. Error detection and correction are enabled through syndrome measurements: multi-qubit measurements that extract information about errors without disturbing the encoded quantum information. Depending on the specific code, syndrome measurements can generally reveal whether an error has occurred, as well as its location and type. 

In the task of estimating a signal Hamiltonian under general Markovian noise, it was known that the HL can be achieved if and only if the signal Hamiltonian lies outside the noise span---a condition known as the `Hamiltonian-not-in-Lindblad-span' (HNLS) criterion~\cite{Zhou2018,PhysRevX.7.041009}, or the `Hamiltonian-not-in-Kraus-span' (HNKS) criterion~\cite{PRXQuantum.2.010343}. 
When the condition is satisfied, one can construct a QEC code that suppresses noise without eliminating the signal~\cite{Zhou2018,PRXQuantum.2.010343}. If it is violated, the achievable precision is reduced to the SQL. In both cases, QEC  protocols can achieve the optimal estimation precision~\cite{Zhou_2020,Demkowicz_Dobrza_ski_2014,Wan_2022}. 

Although QEC codes provide optimal strategies in principle, they typically rely on the availability of noiseless ancillary systems and fast, accurate quantum processing. These assumptions present major obstacles to the practical implementation of the QEC-based schemes. 
The requirement of noiseless ancilla can be removed in some physically relevant scenarios, e.g., estimating Pauli-Z signal under bit-flip noise where quantum repetition codes are used~\cite{PhysRevLett.112.080801,PhysRevLett.112.150801, PhysRevLett.112.150802,unden2016quantum}. More general ancilla-free schemes have also been developed recently~\cite{Layden_2019,Peng2020,zhou2024achieving}. 
In these schemes, quantum information is redundantly encoded across multiple qubits, suppressing noise at the logical level and thereby preserving the HL.  

There remains, however, the strong assumption that one can perform error-free quantum operations, such as syndrome extraction or single-qubit measurements. This assumption is crucial since even one faulty probe measurement can reduce the achievable precision to the SQL. In this work, we aim to relax this requirement and ask whether a fault-tolerant regime for quantum metrology can be established, in analogy with fault-tolerant quantum computation~\cite{548464,PhysRevA.57.127}. Concretely, we assume that any \textit{location} in the protocol may fail, including idle (wait) steps. By a location (or \textit{physical location}), we mean any instance of a basic component---namely, a \textit{physical qubit}, a \textit{physical gate}, or a \textit{physical measurement}\footnote[3]{We use \textit{physical} for emphasize operations at the hardware level, in contrast to logical operations acting on encoded qubits.}. Errors occurring at a given location may propagate to others, thereby creating additional \textit{fault locations.} We formalize these assumptions by defining the circuit-level noise model considered in this work as follows.

\begin{definition}[Circuit-level bit-flip noise model]\label{def:noise model}
We consider a circuit-level noise model in which all errors are Pauli-$X$ (bit-flip) errors and occur independently in space and time. Specifically:
\begin{itemize}
    \item \textbf{State Preparation}: An intended initialization of a qubit in $|0\rangle$ (or $|+\rangle $) results in $|1\rangle$ (or $|-\rangle$) with probability $p_\text{prep}$.
    \item \textbf{Measurements}: Each single-qubit measurement is a projective measurement in the computational ($Z$) basis, with the reported eigenvalue flipped with probability $q$. 
    \item \textbf{Gate and idle errors}: After every single-qubit gate, two-qubit gate, and idle (identity) operation, each involved qubit independently suffers a Pauli-$X$ error with probability $p$.
\end{itemize}
    All error events are assumed to be independent and uncorrelated. No other error channels are considered.\footnote[5]{A more general circuit-level noise model is discussed in Appendix\,\ref{appendix: correlated error}, in which each CNOT gate is followed by a two-qubit Pauli-$X$ error drawn from $\{I\otimes X, X\otimes I, X\otimes X\}$ with probability $p_{\mathrm{CNOT}}$.}
\end{definition}

At the phenomenological level, we model this by assuming that each component suffers an error independently with some probability. Note that while phenomenological noise is not generally equivalent to circuit-level noise due to effects such as hook errors, our bit-flip noise model inherently suppresses such correlated errors, rendering the phenomenological and circuit-level descriptions equivalent for the purpose of our protocol. 

Our main result is to show that the HL can be retained in a fault-tolerant setting where we identify a threshold of the physical error rate below which the precision continues to scale as the HL. We consider estimating a Pauli-$Z$ Hamiltonian that evolves for a short time on $n$ probe qubits and the entire process is subject to bit-flip and state preparation and measurement (SPAM) errors. Informally, we state the theorem as:
\begin{theorem}[Error-threshold for Heisenberg-limited sensing]
In the task of estimating a Pauli-$Z$ Hamiltonian with $n$ probe qubits subjected to local, independent SPAM noise and circuit-level bit-flip errors, there exists a nonzero threshold on the physical error rates. Below this threshold, there exists a sensing protocol whose estimation precision achieves Heisenberg-limited scaling.
\end{theorem}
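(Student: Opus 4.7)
The plan is to design an explicit sensing protocol based on the distance-$d$ bit-flip repetition code with $d = \Theta(n)$, and to show (i) a strictly positive error threshold for the QEC portion of the circuit under the noise of Definition~\ref{def:noise model}, below which the logical error rate per round is $e^{-\Theta(n)}$, and (ii) that below this threshold a fault-tolerant logical-$X$ readout delivers Fisher information $\Theta(n^2)$ per shot. The protocol encodes probes into the logical GHZ state $|+\rangle_L = (|0\rangle^{\otimes n} + |1\rangle^{\otimes n})/\sqrt{2}$, which under the signal $H = (\omega/2)\sum_i Z_i$ accumulates logical phase at rate $n\omega$---the origin of the $n$-fold metrological gain. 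Its three stages are: (a) fault-tolerant preparation of $|+\rangle_L$ by initialising each qubit in $|+\rangle$ and running $\Theta(d)$ rounds of noisy $Z_iZ_{i+1}$ syndrome extraction followed by minimum-weight decoding of the space--time syndrome; (b) signal accumulation for total time $T = \Theta(1)$; and (c) a fault-tolerant logical-$X$ readout implemented via further syndrome rounds and an ancilla-based parity-amplification gadget decoded by maximum likelihood.

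For the QEC threshold, I would use standard arguments for the 1D bit-flip code: minimum-weight perfect matching on the $d\times\Theta(d)$ space--time syndrome graph succeeds unless a connected chain of errors of weight $\ge d/2$ occurs, and a union bound over chain configurations gives
\begin{equation}
 p_L \;\le\; C\,d\,(p/p^{\star})^{d/2}
\end{equation}
for explicit constants $C, p^{\star} > 0$ depending only on $(p, p_{\text{prep}}, q)$. Because every stabiliser has weight two, CNOT-propagated ``hook'' faults contribute only a single-qubit data $X$ error, so the circuit-level and phenomenological thresholds coincide (as highlighted in the main text). Setting $d \propto n$ then yields $p_L = e^{-\Theta(n)}$.

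Conditioned on no logical failure throughout the preparation, sensing, and readout stages---which holds with probability $1 - e^{-\Theta(n)}$ by a union bound over the $M = O(T/\tau_{\mathrm{sync}})$ syndrome cycles---the observed outcome $m \in \{\pm 1\}$ follows $\Pr(m = +1) = (1 + \cos(n\omega T))/2$ up to an additive bias $e^{-\Theta(n)}$. The per-shot Fisher information is therefore $F(\omega) \ge n^2 T^2\,(1 - e^{-\Theta(n)})$, and $O(1)$ shots with $T = \Theta(1)$ deliver $\mathrm{Var}(\hat\omega) \le 1/n^2$ via the Cram\'er--Rao bound---the Heisenberg limit.

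The hard step is the fault-tolerant logical-$X$ measurement: since $X_L = X^{\otimes n}$ has weight $n$, a naive transversal $X$ readout gives a signal damped by the fatal factor $(1 - 2q)^n$, which forbids HL for any constant $q > 0$. Any FT readout must therefore convert the weight-$n$ parity into a low-weight effective observable with error $e^{-\Theta(n)}$, without incurring a super-polynomial ancilla overhead that would by itself erase the Heisenberg scaling when resources are accounted for globally. Proving that such a gadget exists within the threshold $p^{\star}$ established for (i)---so that the QEC analysis and the readout analysis lock into a single strictly positive threshold on $(p, p_{\text{prep}}, q)$---is the technical crux on which the rest of the argument rests.
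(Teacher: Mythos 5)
Your architecture matches the paper's (repetition code, noisy initialisation of $|+\rangle^{\otimes n}$ followed by repeated syndrome extraction and MWPM on the space--time graph, then a fault-tolerant logical-$X$ readout), but the proposal has a genuine gap exactly where you flag ``the technical crux'': you never construct the fault-tolerant $X^{\otimes n}$ readout, you only argue that one is needed. This gadget is the nontrivial content of the result, and the paper's solution is concrete and elementary: each single-qubit $X_i$ is measured \emph{non-destructively} via an ancilla-assisted circuit (CNOT from data to ancilla prepared in $|+\rangle$, then ancilla readout), repeated $r=O(\ln n)$ times with fresh ancillas and majority voting. Because data-qubit $X$ errors commute with the $X_i$ being measured and ancilla faults do not propagate back, the per-repetition error $q_x$ is i.i.d. across repetitions, majority voting fails with probability at most $2^r q_x^{r/2+1}$, and a union bound over the $n$ qubits gives a vanishing logical readout error whenever $q_x<1/4$ --- no parity-amplification gadget or maximum-likelihood decoding is required, and the overhead is only logarithmic. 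Until you exhibit such a gadget, your claim of a ``single strictly positive threshold'' is unproven.

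A second, quantitative error: you claim that conditioned on no logical failure the outcome distribution equals the ideal one up to an additive bias $e^{-\Theta(n)}$, and hence $F\ge n^2T^2(1-e^{-\Theta(n)})$. This is false for the preparation stage. Because the first syndrome round is itself noisy, degenerate local syndrome histories (e.g.\ two consecutive defects caused by either one data error or two measurement errors) force the decoder to leave behind a \emph{residual extensive density} $\Theta(\mathsf{p}^2)$ of physical bit-flips that cause no logical error but shift the phase-accumulation rate from $n$ to $m=n-2w$ with $\mathbb{E}[w]=\Theta(n\mathsf{p}^2)$. The resulting outcome distribution is a mixture over $\cos(m_w\omega T)$ and is \emph{not} exponentially close to $\cos(n\omega T)$; the Fisher information degrades to roughly $(1-2\alpha\mathsf{p}^2)^2 n^2$. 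Heisenberg scaling survives, but your error accounting does not, and this same observation is what lets the paper get away with only $O(\ln n)$ syndrome rounds (enough to push the nonlocal logical error below the irreducible $\mathsf{p}^2$ local floor) rather than the $\Theta(n)$ rounds you budget. Finally, note the paper works at $\theta=0$ with an $|i\mathrm{GHZ}\rangle$ probe (a randomly placed $S$ gate) precisely so the outcome distribution is unbiased at the operating point; with your $\cos$ convention the Fisher information vanishes at $\theta=0$, so you would need to justify operating at a generic phase.
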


Our protocol can be divided into two main parts: probe preparation and measurement. We use a quantum repetition code to prepare probes in an imperfect GHZ state---which approximates the behavior of the optimal GHZ state. As the syndrome measurements are faulty, we repeat these measurements and perform a global recovery method. The probe state then interacts with the signal, therefore, encodes the parameter into the probe state. To perform probe measurement in a fault-tolerant manner, we repeatedly measure the Pauli-$X$ operator non-destructively, facilitated by ancilla qubits, on each probe qubit to construct the logical $X$ measurement. In both the probe preparation and measurement stage, state preparation errors of physical qubits (probe or ancilla) are suppressed by repeatedly measuring the relevant single-qubit observables.  Moreover, we show that it is sufficient to achieve fault-tolerance with only a number of repeated measurements logarithmic in $n$ in total, thereby, only logarithmic overhead in circuit-depth. Our work provides a first example where the HL can be recovered in a fault-tolerant manner where noisy QEC operations are used and bridges the gap between theoretical optimality of QEC and its practical robustness in quantum metrology experiments.

\section{Preliminaries}\label{sec:preliminaries}

\subsection{Classical estimation theory}\label{subsec:classical-estimation-theory}

Consider $(x_1,x_2,\ldots ,x_N)$ to be independent and identically distributed (i.i.d.) random variables drawn from the probability distribution $\mathscr{P}(x;\theta)$, where $\theta \in \Theta\subseteq\mathbb{R}$, $\Theta$ denotes the parameter space (i.e., the set of all possible values of $\theta$), and $\theta$ is the true but unknown parameter value. An estimator $\hat{\theta}$ is \textit{locally unbiased} at $\theta_0 \in \Theta$ if 
\begin{equation}
    \left.\mathbb{E}_\theta[\hat{\theta}]\right|_{\theta_0} = \theta_0,\quad \frac{\partial}{\partial \theta} \left. \mathbb{E}_\theta[\hat{\theta}]\right|_{\theta_0} = 1\,.
\end{equation}
where $\mathbb{E}_\theta[\cdot ]$ denotes the expectation value with respect to the probability distribution $\mathscr{P}(x;\theta)$. The Cram\'{e}r-Rao bound (CRB)\,\cite{Rao1992, Cramer1945} provides, for any locally unbiased estimator, a lower bound on the estimator error $\Delta \hat{\theta}$, given by 
\begin{equation}\label{eq:CRLB}
    (\Delta \hat{\theta})^2 := \mathbb{V}_\theta[\hat{\theta}] \geq \frac{1}{N\mathcal{F}_\text{cl}(\theta)},
\end{equation}
where $\mathbb{V}_\theta[\cdot ]$ denotes the variance over the probability distribution $\mathscr{P}(x;\theta)$, and 
\begin{align*}
    \mathcal{F}_\text{cl} (\theta) &:= \mathbb{E}_\theta\left[\left(\frac{\partial}{\partial \theta} \ln \mathscr{P}(x;\theta)\right)^2\right] \\
    &= \mathbb{V}_\theta[\ln \mathscr{P}(x;\theta)], \numberthis 
\end{align*}
is the Fisher information (FI) associated with probability distribution $\mathscr{P}(x;\theta)$.  

\begin{figure}[t]
    \centering
    \includegraphics[width=0.85\linewidth]{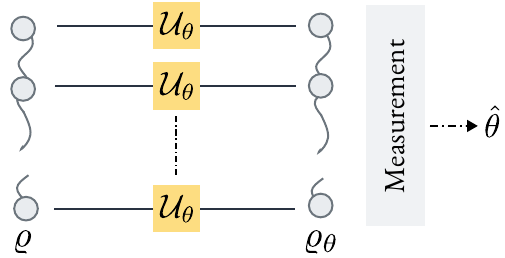}
    \caption{The general framework of quantum metrology. The probe, prepared in known initial state $\varrho$, interacts with the signal via a unitary $\mathcal{U}_\theta$. The final state $\varrho_\theta$ is then measured, from which the parameter $\theta$ is estimated. In our setting, both the state preparation and the measurement process are subject to noise. } 
    \label{fig:quantum-metrolog-protocol}
\end{figure}

\subsection{Quantum estimation theory}\label{subsec:quantum-estimation-theory}

In quantum estimation, the goal is to estimate an unknown parameter $\theta$ encoded in a quantum state $\varrho_\theta$ of the \emph{probe} after its interaction with the signal under investigation. The outcome statistics of a quantum measurement on $\varrho_\theta$ are described by the Born rule: 

\begin{equation}
    \mathscr{P}(x;\theta) = \text{Tr}\left[\Pi_x\, \varrho_\theta\right],
\end{equation}
where $\{\Pi_x\}$ is a positive operator-valued measure (POVM), i.e., a set of positive semidefinite operators satisfying $\sum_x \Pi_x = \mathbb{1}$. Each $\Pi_x$ corresponds to a measurement outcome $x$. For a given POVM $\{\Pi_x\}$, the FI can be written 
\begin{equation}\label{eq:FI-Quantum}
    \mathcal{F}_\text{cl}[\{\Pi_x\},\theta] = \sum_x \mathscr{P}(x;\theta)\left(\frac{\partial \ln \mathscr{P}(x;\theta)}{\partial \theta}\right)^2\,.
\end{equation}
The ultimate bound is achieved by maximizing the FI in \eqref{eq:FI-Quantum} over all possible measurement strategies yields the so-called quantum Fisher information (QFI) $\mathcal{F}_Q$:
\begin{equation}\label{eq:def-quantum-fi}
\mathcal{F}_Q(\theta) = \sup_{\{\Pi_x\}} \mathcal{F}_\text{cl}[\{\Pi_x\},\theta],
\end{equation}
and the generalization of the CRB 
\begin{equation}\label{eq:quantum-CRLB}
    (\Delta \hat{\theta})^2 \geq \frac{1}{N \mathcal{F}_Q(\theta) }\,,
\end{equation}
known as the quantum Cram\'{e}r-Rao bound (QCRB)\,\cite{Helstrom1969,holevo_probabilistic_2011}. 
For a initial state $\varrho_0$ undergoing unitary encoding:
$$\varrho_\theta = \exp(-iA\theta) \varrho_0 \exp(iA\theta)\,.$$
The QFI is given by:
\begin{equation}\label{eq:qfi}
    \mathcal{F}_Q[\varrho,A] = 2 \sum_{k,l} \frac{(\lambda_k - \lambda_l)^2}{(\lambda_k + \lambda_l)} |\langle k|A|l\rangle|^2,
\end{equation}
where $\lambda_k$ and $|k\rangle$ are the eigenvalues and eigenvectors of the density matrix $\varrho_0$, respectively, and the summation goes over all $k$ and $l$ such that $\lambda_k + \lambda_l>0$.
\eqref{eq:qfi} can be simplified for a pure state $\varrho_0 = |\psi\rangle \langle \psi|$:
\begin{align*}
    \mathcal{F}_Q[\varrho_0]  &= 4\left(\text{Tr}[A^2\varrho_0] - \text{Tr}[A\varrho_0]^2 \right) \equiv 4 \langle \left(\Delta A\right)^2\rangle\,. \numberthis 
\end{align*}
The QFI is maximized by the state with maximal variance of $A$, which is achieved by the equal superposition of the extremal eigenstates:
\begin{equation}
    |\psi\rangle = \frac{1}{\sqrt{2}}\left( |\lambda_\text{max}\rangle + |\lambda_\text{min}\rangle \right),
\end{equation}
where $\lambda_\text{max}$ and $\lambda_\text{min}$ are the maximum and minimum eigenvalues of $A$. The corresponding maximal QFI is $(\lambda_\text{max} - \lambda_\text{min})^2$.

\subsection{Quantum metrology with \texorpdfstring{$n$}{TEXT} probing qubits}\label{subsec:quantum metrology with n probing qubits}

In this work, we consider a local Hamiltonian acting on $n$ qubits $\mathcal{H} = \sum_{i = 1}^n Z_i/2$ which generates a unitary evolution (see \figref{fig:quantum-metrolog-protocol}):
\begin{equation}
    \mathcal{U}_\theta = e^{-i\theta \mathcal{H}} = \bigotimes_{i=1}^n e^{-i\theta Z_i/2}\,.
\end{equation}
The maximum attainable QFI is $n^2$, obtained by a GHZ state, $|\text{GHZ}_n\rangle =(|0\rangle^{\otimes n} + |1\rangle^{\otimes n})/\sqrt{2}$. This quadratic scaling of QFI is known as the HL. Classically, one choice of the optimal measurement is to measure the observable $X^{\otimes n}$, which amounts to measuring each qubit in $X$-basis and taking the parity of the outcomes. Its probability distribution function is:
\begin{align}\label{eq:parity-pdf}
   \mathscr{Q}_n({{x }};\theta) &:= \frac{1}{2}\left(1+{{x }}\cos (n\theta)\right), \quad {{x }} = \pm 1 \,.
\end{align}
The classical FI of this binary distribution is
\begin{align}\label{eq:noiseless-ghz-fi}
    \mathcal{F}_\text{cl}(\theta) &= \sum_{{{x }}=\pm1} \frac{\left(\partial_\theta \mathscr{Q}_n ({{x }};\theta)\right)^2}{ \mathscr{Q}_n ({{x }};\theta)} = n^2\,. 
\end{align}
Therefore, the parity measurement attains the QFI. 

In this work, we will work with another equivalently useful input state $|i\text{GHZ}_n\rangle := \left(|0\rangle^{\otimes n} + i|1\rangle^{\otimes n}\right)/\sqrt{2}$, which is more resilient against measurement noise (as discussed in later sections). The probability distribution associated with the parity measurement is given by:
\begin{equation}\label{eq:parity-dist-1}
   \mathscr{P}_n({{x }};\theta) := \frac{1}{2}\left(1+{{x }}\sin (n\theta)\right), \quad {{x }} = \pm 1 ,
\end{equation}
for which FI also scales as $n^2$ and therefore attains the QFI.

\subsection{Noisy quantum metrology}\label{subsec:noisy quantum metrology}

We will now consider the metrological scenario described in \secref{subsec:quantum metrology with n probing qubits} in the presence of local bit-flip noise, acting on all qubits. For now, we assume that the noise acts on the probe state before it interacts with the signal; this assumption will be relaxed later. The bit-flip channel acting on a single qubit density matrix $\rho$ is described using 
\begin{equation}\label{eq:bit-flip-channel-def}
    \Delta[\rho] = (1-p)\rho + p X\rho X\,,
\end{equation}
where $p$ is the probability of bit-flip noise and we assume $p < 1/2$. Therefore, the resulting state $\varrho_\theta$ after sensing is given by $ \mathcal{U}_\theta\left(\Delta^{\otimes n}\left[|i\text{GHZ}_n\rangle \langle i\text{GHZ}_n|\right]\right)\mathcal{U}^\dagger_\theta$. The noisy probe state $\Delta^{\otimes n}\left[|i\text{GHZ}_n\rangle \langle i\text{GHZ}_n|\right]$ is a convex combination over all possible bit-flip error patterns, i.e.
\begin{multline}
    \Delta^{\otimes n}\left[|i\text{GHZ}_n\rangle \langle i\text{GHZ}_n|\right] = \\ \sum_{b \in \mathbb{F}_2^n} p^{\text{wt}(b)} (1-p)^{n-\text{wt}(b)} X^b |i\text{GHZ}_n\rangle \langle i\text{GHZ}_n| X^b \,,
\end{multline}
where $b = (b_1,\ldots ,b_n)\in \{0,1\}^n :=\mathbb{F}^n_2$ labels the pattern of bit-flips, $\text{wt}(b)$ is the Hamming weight (number of 1s in $b$), and $X^b = X^{b_1}\otimes \cdots \otimes X^{b_n}$. For each bitstring $b\in \mathbb{F}_2^n$, define the  bit-flipped GHZ state:
\begin{align}
    |\psi_b\rangle &:= \frac{1}{\sqrt{2}} \left(|b\rangle + |\overline{b}\rangle\right) \Rightarrow \varrho_b:=|\psi_b\rangle \langle \psi_b| \,,
\end{align}
where $\overline{b} := 1^n \oplus b$. Let $\mathbb{S}_w\subset \mathbb{F}_2^n$ be the set of bitstrings with Hamming weight $w$. We now write the noisy probe state as a mixture over these $\varrho_b$'s, grouped by the number of errors $\text{wt}(b)$:
\begin{multline}
    \Delta^{\otimes n}\left[|i\text{GHZ}_n\rangle \langle i\text{GHZ}_n|\right] = \\ \sum_{w=0}^n \binom{n}{w} p^w (1-p)^{n-w}  \sum_{b\in \mathbb{S}_w} \rho_b. 
\end{multline}
We further define the \emph{magnetization number} of the state $|\psi_b\rangle$ as 
\begin{equation}
m(b):=\text{wt}(\overline{b}) - \text{wt}(b)= n - 2\text{wt}(b) \,,
\end{equation} 
which is the energy gap between the two superposed states, such that
\begin{equation}
    \mathcal{U}_\theta |\psi_b\rangle = \frac{1}{\sqrt{2}}\left(e^{-im(b)\theta /2} |b\rangle  + ie^{im(b)\theta /2}  |\overline{b}\rangle \right)\,. 
\end{equation}
The probability associated with parity measurement of the state $\varrho_\theta$ can be written as 
\begin{equation}
    \mathscr{P}({{x }};\theta) = \sum_{w=0}^n p_w\cdot \mathscr{P}_{m_w}({{x }};\theta)\,,\quad {{x }}=\pm1,
\end{equation}
where 
\begin{equation}
\label{eq:Bnwp}
p_w = \binom{n}{w} p^w(1-p)^{n-w} =: \mathcal{B}(n,w,p)
\end{equation}
is the probability of occurring weight $w$ bit-flip error and $m_w=n-2w$ is a magnetization number associated with set $\mathbb{S}_w$. We will also use $\overline{m}$ to represent the \emph{average magnetization number} over the distribution $p_w$. 
The associated FI can be calculated, and one finds 
\begin{equation}
    \mathcal{F}_\text{cl}(\theta)= 
     \frac{\left(\sum_w m_w p_w \cos(m_w\theta)\right)^2}{4\mathscr{P}(+1;\theta)\mathscr{P}(-1;\theta)}. 
\end{equation}

In this work, we will focus on parameter estimation near the point $\theta = 0$, which physically corresponds to the situation of weak signal and short-time evolution. Thus, unless otherwise mentioned, we will carry out calculations of FI at $\theta = 0$ below, which characterizes the parameter estimation precision in the vicinity of $\theta = 0$. We have 
\begin{equation}
   \mathcal{F}_\text{cl}(\theta) \big|_{\theta = 0} = \left(\sum_w m_w p_w\right)^2 = \overline{m}^2 = (1-2p)^2n^2. \label{eq:withoutQEC}
\end{equation}

This means the HL is achieved even when the bit-flip error is present. This is closely related to the fact that the input GHZ state is the logical state of the repetition code (introduced in \secref{sec:stab} in more detail) which is robust against bit-flip errors. Note that in the situation where perfect syndrome measurement is performed, the value of $|m|$ can be exactly determined and one can calculate and show that $\mathcal{F}_\text{cl}(\theta) \big|_{\theta = 0} = (1-2p)^2n^2 + 4(1-p)p n$, slightly improving \eqref{eq:withoutQEC}. 

In general, QEC has been shown useful to recover the HL in noisy quantum metrology in situations where the ``Hamiltonian-in-Kraus-span'' (HNKS) condition is satisfied~\cite{PRXQuantum.2.010343}. The case of Pauli-Z rotation and Pauli-X noise is the most well studied example~\cite{PhysRevLett.112.150802,PhysRevLett.112.150801,PhysRevLett.112.080801,ozeri2013heisenberglimitedmetrologyusing,unden2016quantum}. It satisfies the HNKS condition where the Hamiltonian (Pauli-$Z$) is not in the span of the noises (the linear span of Pauli-$X$ and identity operators). In contrast, dephasing noise or depolarizing noise in this case is fundamentally more harmful, where no-go results showed the SQL cannot be surpassed~\cite{Escher2011,4655455,Demkowicz-Dobrzański2012}.  

Even though QEC was proven useful in recovering the HL under Pauli-X noise, the proposal assumes only Pauli-X noise during the signal accumulation step, e.g. in our discussion above where Pauli-X noise acts after the state preparation and prior to the Pauli-Z rotation. Similar results carry over to situations where Pauli-X acts during or right after the Pauli-Z rotation. However, it was assumed that there are no noise in probe state preparation and measurement noise---which are ubiquitously present in practical situations. This motivates our study of fault-tolerant metrology where both Pauli-X noise and measurement noise occur in the state preparation and measurement step.

Finally, to demonstrate the harmful effect of measurement noise, we will again consider $\mathcal{H} = \sum_{i = 1}^n Z_i/2$ and assume each single-qubit measurement outcome (in the $X$-basis) is flipped independently with probability $q$. We assume no other noise, i.e., the probe state preparation ($|i\text{GHZ}_n\rangle$) and sensing are ideal. Note that the ideal distribution over $X$-basis bit-strings depends only on the product of the measured signs 
\begin{equation}\label{eq:def-parity}
    x := \prod_{j=1}^n x^{(j)} \in \{+1,-1\}\,,
\end{equation}
where $x^{(j)} = +1$ denotes $|+\rangle $ on qubit $j$, $x^{(j)}=-1$ denotes $|-\rangle$. All strings with the same product $x$ have the same probability; so the measurement effectively yields two aggregated outcomes (either $x = \pm 1$) given by \eqref{eq:parity-pdf}. An even number of flips preserves the overall parity of measured $X$-outcomes, while an odd number of flips inverts the parity. Therefore, the probability that the measurement outcome $S$ flips its value given by 
\begin{equation*}
    \sum_{i \text{ odd}} \binom{n}{i} q^i (1-q)^{n-i} = \frac{1-(1-2q)^n}{2}\,,
\end{equation*}
leading to the observed $X^{\otimes n}$ distribution
\begin{equation}\label{eq:parity-pdf-measuremnt-error}
    \mathscr{P}_n^{\text{obs}}({{x }};\theta) = \frac{1}{2}\left(1+{{x }} \eta \sin(n\theta)\right)\,,
\end{equation}
where $\eta:=(1-2q)^n$ denotes the probability bias for parity preservation. We can calculate the FI associated with \eqref{eq:parity-pdf-measuremnt-error} which leads to 
\begin{equation}
    \mathcal{F}_\text{cl}(\theta) \big|_{\theta = 0} = \frac{\eta^2 n^2\cos^2(n\theta)}{1-\eta^2 \sin^2(n\theta)} \big|_{\theta = 0} = \eta^2 n^2\,. 
\end{equation}

Therefore, the FI decays exponentially in presence of noisy measurement, which renders the metrological protocol useless when $n$ is large. 

\subsection{Stabilizer codes}\label{sec:stab}

We will work closely with stabilizer codes, particularly the repetition code. Here, we will briefly introduce concepts used in this work.

Stabilizer codes~\cite{gottesman1997stabilizercodesquantumerror} are a widely used class of quantum error-correcting codes that use the algebraic structure of the \emph{Pauli group} to protect quantum information. For $n$ qubits, the Pauli group $\mathsf{P}^n$ comprises all $n$-fold tensor products of $\left\{I,X,Y,Z\right\}$, up to global phases in $\{\pm1,\pm i\}$. Elements of $\mathsf{P}^n$ either commute or anticommute. A $\llbracket n,k \rrbracket$ stabilizer code encodes $k$ logical qubits into $n$ physical ones by defining a codespace as the simultaneous $+1$ eigenspace of an \emph{Abelian subgroup} $\mathscr{S} \subseteq \mathsf{P}^n$, called the stabilizer group, which excludes $-I^{\otimes n}$ to ensure a non-trivial code. Typically, $\mathscr{S}$ has $(n - k)$ independent generators, similar to parity checks in classical codes, yielding a codespace of dimension $2^k$.

The centralizer of the stabilizer group is defined as 
\begin{equation}
    C(\mathscr{S})=\left\{P\in \mathsf{P}^n\ |\ [P,S]=0\ \forall S\in \mathscr{S}\right\}
\end{equation}
Logical operators $\mathscr{L}\subseteq \mathsf{P}^n$ are elements of $C(\mathscr{S})$ that are not in $\mathscr{S}$, i.e. the nontrivial cosets in the quotient group $C(\mathscr{S})/\mathscr{S}$.

The code distance $d$ is the minimum weight of such a nontrivial logical operator:
\begin{equation}
    d = \min\left\{\text{wt}(P)\ | \ P\in C(\mathscr{S})\setminus \mathscr{S}\right\}\,,
\end{equation}
quantifying the code's ability to correct up to $\lfloor(d-1)/2\rfloor$ qubit errors. Since any quantum error can be decomposed into Pauli operators, we consider a discrete error set $\mathsf{E} \subset \mathsf{P}^n$. The code can correct $\mathsf{E}$ if for all $E_1,E_2 \in \mathsf{E}$, the operator $E_1^\dagger E_2$ either: (a) anticommutes with at least one element of stabilizer (yielding a distinguishable syndrome), or (b) belongs to the stabilizer group $\mathscr{S}$ (acting trivially on the codespace). Formally~\cite{PhysRevA.55.900,PhysRevLett.84.2525},
\[
E_1^\dagger E_2 \notin C(\mathscr{S}) \quad \text{or} \quad E_1^\dagger E_2 \in \mathscr{S},
\]
This condition ensures that distinct errors produce distinct syndromes, enabling reliable recovery. 

\begin{figure}
    \centering
    \includegraphics[width=0.85\linewidth]{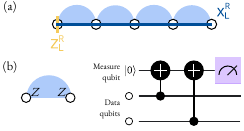} 
    \caption{ (a) Illustration of the distance-5 repetition code. Data qubits are represented by open circles. (b) The quantum circuit for measuring $Z$ syndrome.}
    \label{fig:stabilizer-codes}
\end{figure}

\subsubsection{Repetition code}\label{subsubsec:repetition-code}

A repetition code is the simplest example of stabilizer code~\cite{PhysRevA.52.R2493} (see \figaref{fig:stabilizer-codes}). It encodes a single logical qubit into $n$ physical qubits (referred to as \textit{data qubits}) or $\llbracket n,1 \rrbracket$ by redundantly copying the logical information. The codespace is stabilized by $(n-1)$ commuting operators of the form:
\begin{equation}\label{eq:repetition-stabilizers}
    \mathscr{S}^\textsf{R} = \left\langle Z_1Z_2,Z_2Z_3,\ldots,Z_{n-1}Z_n \right\rangle\,,
\end{equation}
where $\langle \cdot \rangle$ denotes the group generated by these operators, i.e., the set of all products of these generators (including the identity). These generators enforce that all qubits are in the same $Z$-basis state, either all $|0\rangle $ or all $|1\rangle$. Thus, the logical basis states are $|0\rangle^\textsf{R}_\textsf{L} = |0\rangle^{\otimes n}$ and $|1\rangle^\textsf{R}_\textsf{L} = |1\rangle^{\otimes n}$. The logical Pauli operators are: $Z^\textsf{R}_\textsf{L} = Z_i$ (for any $i$) and $X^\textsf{R}_\textsf{L} = X^{\otimes n}$. Because any single bit-flip error anticommutes with exactly one stabilizer generator and changes the syndrome, the code can detect up to $\lfloor (n-1)/2 \rfloor$ such errors. This form of repetition code is useful for correcting bit-flip errors, but not phase-flip errors, since all stabilizers are diagonal in the $Z$-basis. 

The set of physical qubits (referred to as \textit{measure qubits}) are used to measure $Z$ syndrome ($Z$-stabilizers in \eqref{eq:repetition-stabilizers}) as shown in \figbref{fig:stabilizer-codes}.

\subsubsection{The minimum weight matching decoder}\label{subsubsec:the decoding problem}

Given an error $E\in \mathsf{E}$, the decoder uses its syndrome $\sigma(E)$ to select a correction operator $R\in \mathsf{P}^n$ to apply to the corrupted state $E|\phi\rangle$. Decoding is successful if $RE\in \mathscr{S}$, and fails otherwise. If $R$ is consistent with the syndrome, then $RE\in C(\mathscr{S})$. A logical error occurs when $RE\in C(\mathscr{S})\setminus \mathscr{S}$. The decoder's task is to return any element of the coset $[E]:=\{ES\quad \forall S\in \mathscr{S}\}$. 

\begin{figure}
    \centering
    \includegraphics[width=0.85\linewidth]{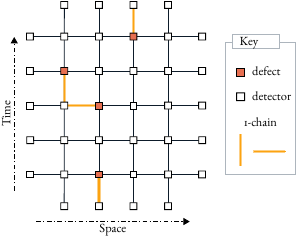}
    \caption{The MWPM decoding problem for a distance-5 repetition code (with measurement errors). The $X$-error matching graph is shown with $5$ rounds of syndrome measurement. Each detector represents the difference syndrome i.e., the difference (modulo 2) between the syndrome measurement in time step $t$ and $(t-1)$, and ensures that any single measurement error results in two syndrome defects. We assume all syndrome outcomes can be flipped with probability $\mathsf{q}$, even in the last round of the syndrome measurement.}
    \label{fig:rep-matching}
\end{figure}

In this work, we will use a minimum weight decoder which identifies the minimum weight error consistent with the syndrome~\cite{10.1063/1.1499754,deMartiiOlius2024decodingalgorithms,PhysRevA.86.032324}. While its performance might be inferior to the maximum likelihood decoding~\cite{7097029}, the minimum-weight perfect matching (MWPM) decoder provides an efficient method for solving the decoding problem for certain important code families, including the repetition code and the surface code (with either $X$- or $Z$-type errors)~\cite{Higgott2025sparseblossom,10.1145/3505637, 10313859}.

We will now consider the problem of decoding a Pauli $X$ error in $\mathsf{E}_X=\{I,X\}^n$ (assuming no measurement errors) for a stabilizer code~\cite{deMartiiOlius2024decodingalgorithms,Higgott2025sparseblossom,10.1145/3505637}. The MWPM decoder begins with defining a matching graph $\mathcal{G}=(\mathcal{V},\mathcal{E})$ (also known as decoding graph or detector graph). Each node $v \in \mathcal{V}$ represents a $Z$-type check or a \emph{detector}. Each edge $e\in \mathcal{E}$ is a set of one or two detector nodes, representing an $X$-error that flips this set of detectors. We can be decomposed the edge set as $\mathcal{E}=\mathcal{E}_1\cup\mathcal{E}_2$, where $\forall e\in\mathcal{E}_1:|e|=1$ (refer as half-edge) and $\forall e\in\mathcal{E}_2:|e|=2$, where $|e|$ represents the cardinality of the edge $e$. For a half-edge $e=(u,)\in \mathcal{E}_1$ corresponds to an error for which there is only one anti-commuting $Z$-checks, in other words, there is a single detector $u\in \mathcal{V}$ that can flip. On the other hand, a regular edge $e=(u,v)\in \mathcal{E}_2$ corresponds to an error that can flip two detectors $(u,v)\in \mathcal{V}$. For a half-edge $(u,)\in \mathcal{E}$ we introduce a virtual node $v_b$.

An error $E\in \mathsf{E}_X$ corresponds to a subset of edges called a $1$-\textit{chain}, and the subset of nodes $\mathcal{D}\subseteq \mathcal{V}$ corresponding to $Z$-checks that anticommute with $E$, referred to as \textit{defects} (i.e., detectors that register a negative syndrome outcome). Minimum weight decoding of $X$ errors then corresponds to finding the shortest $1$-chain whose endpoints have the defect nodes and/or virtual nodes\footnote[2]{If each qubit is exposed to a different error probability, we assign the corresponding edge the weight $w_i = \ln((1-p_i)/p_i)$, where $p_i$ is the associated error probability.}

When syndrome measurements themselves are noisy, measurements are repeated $O(d)$ times, where $d$ is the distance of the code~\cite{10.1063/1.1499754}. In this case, we define a detector as a change in the outcome of a stabilizer measurement between consecutive rounds, and a defect as a detector that registers a nontrivial change. For the repetition code, repeated noisy syndrome measurements give rise to a 2D matching graph, as illustrated in \figref{fig:rep-matching}.

\section{Overview of protocol}\label{sec:overview-of-protocol}

In this section, we provide an overview of our protocol before describing each step in detail. The protocol consists of three main stages: (1) preparation of the probe state, which includes a fault-tolerant preparation of the GHZ state $\left(|0\rangle^{\otimes n} + |1\rangle^{\otimes n}\right)/\sqrt{2}$ and a phase gate $S = \ket{0}\bra{0} + i\ket{1}\bra{1}$ afterwards, applied uniformly randomly on one of the $n$ probe qubits, (2) interaction of the probe with the signal $\mathcal H = \sum_{i} Z_i/2$, and (3) measurement of the parity operator $X^{\otimes n}$. 

We will calculate and plot in \secref{sec:performance-fi} the FI of our protocol at $\theta = 0$ and show it approaches the HL when the physical error rates are below certain thresholds.

In this work, we adopt a phenomenological noise model that is equivalent to the circuit-level noise mode defined in Def.\,\ref{def:noise model}. We allow for effective errors associated with the syndrome measurement process. At the phenomenological level, we model a faulty syndrome outcome by assuming that each syndrome measurement value is independently flipped with probability $\mathsf{q}$. In addition, we introduce an effective qubit error rate $\mathsf{p}$, which accounts for accumulated Pauli-$X$ errors suffered by the data qubits during a syndrome measurement round. Both $\mathsf{q}$ and $\mathsf{p}$ are treated as independent effective error parameters, capturing the net effect of circuit-level faults during syndrome extraction without explicitly modeling the underlying measurement circuit. A summary of all error parameters and their physical interpretations is provided in Appendix\,\ref{appendix:notation}.

Although we describe errors using a phenomenological model for simplicity, our protocol is fully compatible with circuit-level noise. In a circuit-level description, faulty syndrome measurements arise from Pauli-$X$ errors occurring at any location in the syndrome-extraction circuit, including measure qubit preparation, two-qubit gates, and ancilla readout. The effective parameter $\mathsf{q}$ (likewise $\mathsf{p}$) can therefore be viewed as incorporating all such contributions (see Appendix\,\ref{appendix:syndrome measurement}). Our threshold results and scaling behavior remain unchanged under this more realistic noise model.

Importantly, there are no hook errors in our setting. A hook error refers to a correlated multi-qubit data error generated when a single fault in a syndrome-extraction circuit propagates from the measurement qubit to multiple data qubits. In our protocol, we consider only Pauli-$X$ noise, and the stabilizers are measured using a circuit in which faults on the measurement qubit do not propagate back to the probe qubits. As a result, a single physical fault cannot produce correlated multi-qubit $X$ errors, and hook errors are absent.


We assume all errors are independent and uncorrelated in space and time. Finally, we do not consider dephasing noise, as previous no-go results have established that the Heisenberg limit is unattainable in that case.


The procedure begins with preparing an $n$-qubit GHZ state. We do this by initializing the product state $|+\rangle^{\otimes n}$ --- due to state preparation errors, some of them are prepared in the $|-\rangle$ state. The state preparation process can be thought of as $Z$-error after $|+\rangle $ initialization, such an error commutes with stabilizer measurements and cannot be corrected. To suppress these errors, we perform repeated non-destructive measurements of the Pauli-$X$ operator on each probe qubit using ancilla-assisted measurement circuits. The measurement outcomes are classically processed using majority voting to infer the likely eigenvalue of $X$. If the inferred outcome is $-1$, we apply a corrective Pauli-$Z$ operation to flip the state back to $|+\rangle $. As shown in Appendix\,\ref{appendix:probe initialization error}, only $O(\ln n )$ such measurements are sufficient to suppress the state preparation error. After initialization, we perform $O(\ln n)$ rounds of syndrome measurements of the stabilizer generators in $\mathscr{S}^\mathsf{R}$, followed by decoding operations.\footnote[4]{In general, as we discussed in Sec.\,\ref{subsubsec:the decoding problem}, one require to perform $O(n)$ round of syndrome measurements, however, as we show in Sec.\,\ref{sec:state preparation}, we only require $O(\ln n)$ measurements due to presence of undetectable local error.}

Note that the additional phase $i$ in the probe state $|\psi\rangle$ is added  to guarantee the output parity distribution is $\{1/2,1/2\}$ at $\theta = 0$, which is the most robust choice against measurement noise. Furthermore, we perform the phase gate on a qubit picked uniformly randomly from all qubits to make sure the associated gate noise is applied uniformly randomly on all qubits. In this step, the $S$ gate may turn an $X$ error before its application into an $Z$ error after its application, which cannot be corrected. However, the scaling of FI will not be affected since the $Z$ error only acts on one qubit and is not extensive. 

The probe state interacts with the signal via the Hamiltonian $\mathcal{H} = \sum_{i} Z_i/2$ which generates the unitary evolution $\mathcal{U}(\theta) = \exp\left(-i\theta \mathcal{H}\right)$. We assume the system-probe interaction is instantaneous, so that no bit-flip error occurs during the sensing process. 

Assuming perfect state preparation, after the quantum signal acts on the probe state $|\psi\rangle$, the state evolves to 
\begin{equation}
    |\psi_\theta\rangle^\mathsf{R}_\mathsf{L} = \frac{1}{\sqrt{2}}\left(e^{-in\theta}|0\rangle^\mathsf{R}_\mathsf{L} +ie^{in\theta}|1\rangle^\mathsf{R}_\mathsf{L}\right). 
\end{equation}
To measure the logical parity operator $X^\mathsf{R}_\mathsf{L}$ in a fault-tolerant manner, we sequentially perform non-destructive, ancilla-assisted measurements of the single-qubit operators $X_1,\ldots ,X_n$. Since individual measurement outcomes are faulty, each $X_i$ is measured multiple times using fresh ancilla qubits while preserving the post-measurement eigenstate. The resulting outcomes are classically processed using majority voting. We show that, below a threshold error rate, the probability of an incorrect logical parity measurement decays exponentially with the number of repetitions, which scales as $O(\ln n)$.

\section{State preparation}\label{sec:state preparation}

In this section, we describe the preparation of the probe state as the logical $|+\rangle^\textsf{R}_\textsf{L}$ state of the repetition code. Further discussion on the additional phase gate afterwards will be deferred to \secref{sec:performance-fi}. The procedure involves initializing qubits in an eigenstate of the logical operator $X_\textsf{L}^\textsf{R}$, repeatedly measuring the stabilizers, and decoding the resulting syndrome history to apply corrections.

The preparation of an $n$-qubit GHZ state is equivalent to the preparation of the logical state $|+\rangle^\textsf{R}_\textsf{L}$ of the $n$-qubit repetition code. Assuming perfect syndrome measurements, the preparation is done in the following manner. We begin with the perfect unentangled product state $|+\rangle^{\otimes n}$. In practice, state-preparation errors introduce a small number of Pauli-$Z$ errors; however, since such errors commute with the stabilizer measurement of the repetition code, they do not affect the preparation procedure itself. Their impact is instead accounted for later when we evaluate the Fisher information of the protocol in Sec.\,\ref{sec:performance-fi}.

The initial state $|+\rangle^{\otimes n}$ is an eigenstate of the logical operator $X^\textsf{R}_\textsf{L}$, but it is not stabilized by the stabilizers $\mathscr{S}^\mathsf{R}$. Next, we measure all stabilizer generators in $\mathscr{S}^\mathsf{R}$. Since the stabilizers commute with the logical operator $X^\textsf{R}_\textsf{L}$, this measurement does not disturb the logical value. The measurement outcomes result in defects ($-1$ value of syndrome measurement outcome). To restore the state to the code space, we must eliminate all such defects. We select an arbitrary one-chain (a collection of adjacent qubit locations) whose boundary coincides with the positions of the detected defects. Applying $X$ operators along this chain implements a correction that removes the defects. Although this $X$ correction may apply a logical $X^\textsf{R}_\textsf{L}$ operator if the chain is homologically nontrivial, this has no effect, as the state is in the $+1$ eigenspace of $X^\textsf{R}_\textsf{L}$. The result is a state stabilized by both the code’s stabilizers and the logical operator $X^\textsf{R}_\textsf{L}$, i.e., the desired logical state $|+\rangle^\textsf{R}_\textsf{L}$.

\begin{figure}
\centering 
\includegraphics[width=\linewidth]{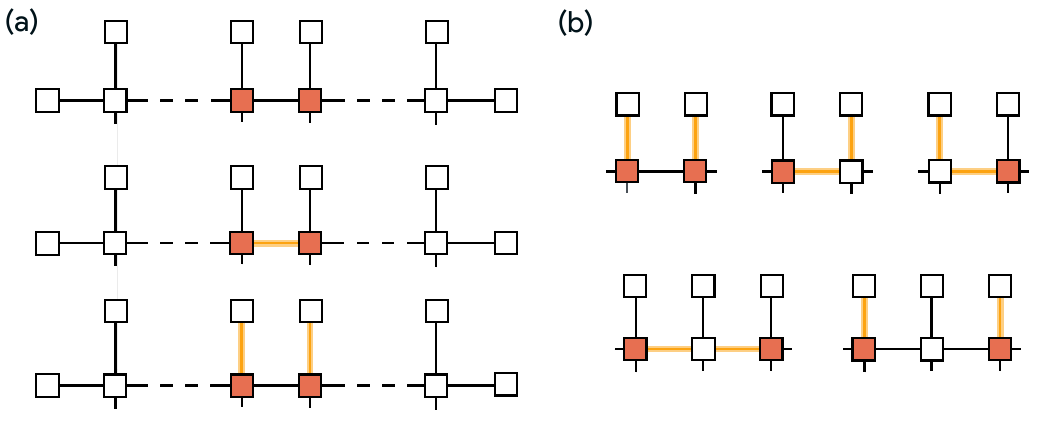}
\caption{
(a) The top panel shows the observation of two consecutive defects. These defects may be caused by a single-qubit error (middle panel), which occurs with probability $\mathsf{p}$, or by two consecutive measurement errors (bottom panel), which occurs with probability $\mathsf{p}^2$. (b) All possible error mechanisms that occur with probability $\mathsf{p}^2$---each leading to a failure of the matching algorithm. }
\label{fig:decoding_state_prep}
\end{figure}

In the presence of syndrome measurement errors (with effective error probability $\mathsf{q}$), the above preparation procedure can produce long open chains of $X$ errors (with effective probability $\mathsf{p}$ at all locations). We will assume $\mathsf{p}=\mathsf{q}$ for simplicity, however, our protocol does not require and is independent of this choice. Generally to suppress these errors, one needs to perform $n$ rounds of stabilizer measurement. However, in our case, (discussed in detail below) $O(\ln n)$ rounds of measurement suffices since there are local errors that dominate at small $\mathsf{p}$. Next we use syndrome decoding over the resulting space-time graph to identify and apply the appropriate correction. In \figref{fig:rep-matching}, we show a decoding graph for the repetition code. The horizontal and vertical edges are space-like and time-like edges, which represent $X$ error and measurement error mechanism, respectively. The MWPM decoder gives a set of space-like and time-like errors corresponding to the syndrome. Any time-like matching to virtual detector corresponds to a measurement error in the last syndrome measurement. We use these matchings to correct the syndrome measurement and apply the correction to obtain the GHZ state $|+\rangle^\textsf{R}_\textsf{L}$.

We can qualitatively understand when the matching algorithm fails. We note that if we can predict the syndrome of the last step faithfully, then we can obtain a correction to perfectly recover the state $|+\rangle^\textsf{R}_\textsf{L}$. Therefore, to see when the matching algorithm might fail, we particularly focus on the defect matching corresponding to the last step. We can qualitatively understand the types of errors that lead to logical errors. Consider the case where we have two consecutive defects, shown in \figaref{fig:decoding_state_prep}. These defects could arise either from a single qubit error or two consecutive measurement errors. Since a single-qubit error occurs with probability $\mathsf{p}$, whereas two measurement errors occur with probability $\mathsf{p}^2$, the matching algorithm will typically misidentify the error as a single-qubit error with probability of order $\mathsf{p}^2$.
Such a misidentification leads to a single bit-flip error in our GHZ state. The average number of bit flips from this error mechanism can be calculated roughly by adding up such errors across the entire chain, giving a contribution of $\Theta(n\mathsf{p}^2)$. In \figbref{fig:decoding_state_prep}, we illustrate all possible $\mathsf{p}^2$ error mechanisms. Each of these mechanisms contributes $\Theta(n\mathsf{p}^2)$ to the total average number of bit flips.

\begin{figure}
    \centering
    \includegraphics[width=0.95\linewidth]{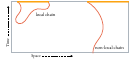}
    \caption{Illustration showing local chain that has end points in the same time-like boundary, while non-local chain that has end points in the opposite time-like boundary. The yellow lines show the error associated with the respective chain (shown in orange).}
    \label{fig:error-chains}
\end{figure}

\begin{figure*}[t]
    \centering
    \includegraphics[width=0.95\linewidth]{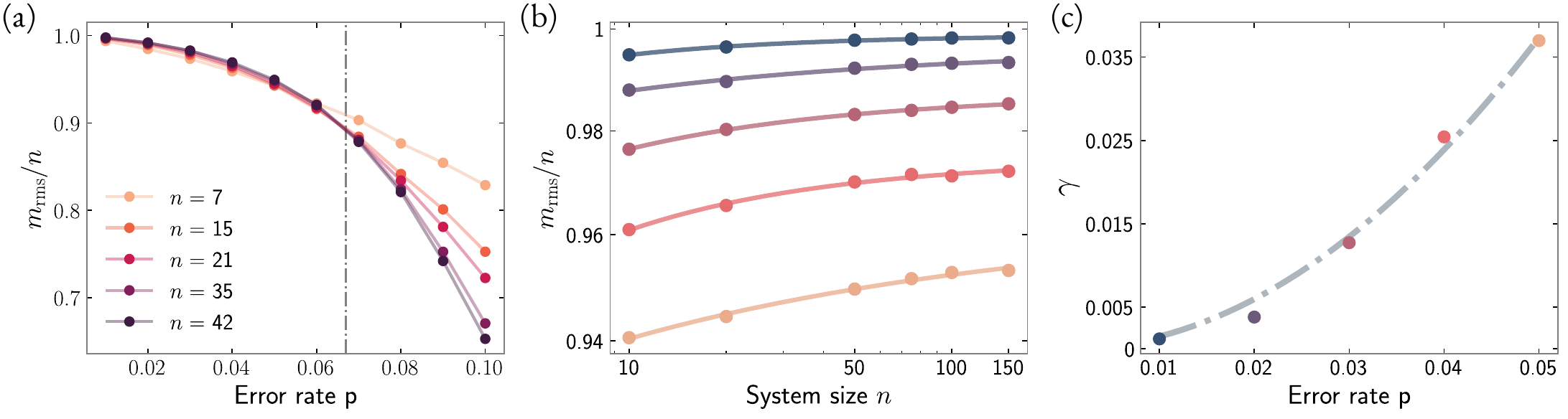}
    \caption{
    Numerical simulation of state preparation, where $\lceil 5\ln n\rceil$ rounds of repeated syndrome measurements are performed. (a) The normalized root mean square magnetization $m_\text{rms}/n$, with error rates $\mathsf{p}=\mathsf{q}$. The dotted line shows the threshold value ($p^{(s)}_\text{th} \approx 0.067$) found by calculating crossover at finite system-size. 
    Below threshold, the normalized rms magnetization number is approximately $1-2\alpha \mathsf{p}^2$ for some constant $\alpha$. (b) The same quantity computed below threshold for various $\mathsf{p}$, asymptotically approach $1-2\alpha \mathsf{p}^2$. The line shows a fit of the form $1-\gamma-c/n^\nu$ with parameters ($\gamma,c,\nu$). From top to bottom, colors correspond to $\mathsf{p} = 0.01,0.02,0.03,0.04,0.05$ respectively. (c) The fitted value of $\gamma$ shown for the same $\mathsf{p}$ values as (b), showing that $\gamma$ is of order $\mathsf{p}^2$.}
    \label{fig:state-prep-err}
\end{figure*}

Next, we show that $O(\ln n)$ repeated syndrome measurements are sufficient to suppress non-local errors. Let us assume that we perform $a$ repeated syndrome measurements. What is the error probability $p_\text{nl}$ associated with non-local error that connects the two time-like boundaries? These errors happen when the actual error chain $e$ and the minimum weight chain $e_\text{main}$ form a homologically nontrivial, i.e., the path $e+ e_\text{min}$ connects the opposite boundary in the syndrome graph. It follows that there must be at least one path of $\ell \geq a$ detectors connecting opposing boundaries containing at least $\ell_\text{min} \leq \lceil \ell/2\rceil $. Given a particular chain of length $\ell$, the probability of at least $\lceil \ell/2 \rceil $ of its lines being associated with errors is 
\begin{align*}
    \sum_{i= \lceil \ell/2\rceil}^\ell \binom{\ell}{i} \mathsf{p}^i(1-\mathsf{p})^{\ell-i} &\leq  \sum_{i= \lceil \ell/2\rceil}^\ell \binom{\ell}{\lceil \ell/2\rceil}  \mathsf{p}^i \,, \\
     &= \binom{\ell}{\lceil \ell/2\rceil} \mathsf{p}^{\lceil \ell/2\rceil} \sum_{i = 0}^{\ell - \lceil \ell/2\rceil} \mathsf{p}^{i} \\
     &\leq \binom{\ell}{\lceil \ell/2\rceil} \mathsf{p}^{\lceil \ell/2\rceil} \frac{1}{1-\mathsf{p}} \\ 
     & \leq  \mathsf{p}^{\lceil \ell/2\rceil}  \sum_{i=0}^\ell   \binom{\ell}{\lceil i \rceil} \\
     &= 2^\ell \mathsf{p}^{\lceil \ell/2\rceil} \,. 
\end{align*}
Furthermore, the number of paths of length $\ell$ can be upper bounded by first choosing one boundary as the set of possible starting points, giving $(n-1)$ choices.  

Each detector is connected to at most four neighbors, and to avoid backtracking, each step can proceed to at most three of them. After selecting the initial point, there are $\ell -1$ directional choices, leading to no more than $(n-1)3^{\ell-1}$ possible chains of length $\ell$. The probability of a logical error $\wp$ is therefore upper bounded by
\begin{align*}
    \mathsf{p}_\text{nl} &\leq  \sum_{\ell = a}^\infty n 3^{\ell-1} 2^{\ell} \mathsf{p}^{\lceil \ell/2\rceil }\\
    &= \frac{n}{3} \sum_{\ell = a}^\infty 6^\ell \mathsf{p}^{\lceil \ell/2\rceil } \\
    &\leq \frac{n}{3} (6\sqrt{\mathsf{p}})^{a}\sum_{\ell = 0}^\infty (6\sqrt{\mathsf{p}})^{\ell}\\
    &= \frac{n}{3} (6\sqrt{\mathsf{p}})^{a} \frac{1}{1-6\sqrt{\mathsf{p}}}\,.
\end{align*}
Let us choose $n=2^a$ so that
\begin{equation}
\label{eq:step2 measurement}
    \mathsf{p}_\text{nl} \leq \frac{1}{3(1-6\sqrt{\mathsf{p}})} \left(144\mathsf{p} \right)^{a/2}\,.
\end{equation}

Assume that $\mathsf{p}_w$ denotes the probability of having a weight-$w$ bit-flip error. Then the noisy GHZ state has the average magnetization number $\overline{m} = \sum_w \mathsf{p}_w m_w$. When the noisy GHZ state interacts with the quantum signal, it acquires on average a phase $\overline{m} \theta$. As we will discuss in more detail in \secref{sec:performance-fi}, unlike the noisy case discussed in \secref{subsec:noisy quantum metrology}, the FI in our protocol depends not on the average magnetization $\overline{m}$, but rather on its second moment or mean-square magnetization $\overline{m^2}\equiv m^2_\text{rms}$.

We will now calculate $m_\text{rms}$ in the case where there are two error mechanisms --- (a) local errors that occur with order $\mathsf{p}^2$ produce a single bit-flip errors, and (b) non-local time-like error that occurs with probability $\mathsf{p}_\text{nl}$ and produces $O(n)$ bit-flip error. For the latter, it suffices to assume that any error results in zero magnetization. We will assume that the two error mechanisms act independently such that $\mathsf{p}_w = \mathsf{p}^{\text{l}}_w+ \mathsf{p}^{\text{nl}}_w$, where $\mathsf{p}^{\text{l}}_w$ and $\mathsf{p}^{\text{nl}}_w$ are probability distributions associated with two error mechanisms. Therefore,
\begin{align*}
    \mathbb{E}[w] &= \mathbb{E}_\text{l}[w] + \mathbb{E}_\text{nl}[w] =  \alpha n \mathsf{p}^2 + \frac{n}{2} \mathsf{p}_\text{nl} \,, \\ 
    \mathbb{E}[w^2] &= \mathbb{E}_\text{l}[w^2] + \mathbb{E}_\text{nl}[w^2] = \mathbb{E}_\text{l}[w^2] +  \frac{n^2}{4}\mathsf{p}_\text{nl} \,,
\end{align*}
where $\mathbb{E}_\text{l}[\cdot ]$ and $\mathbb{E}_\text{nl}[\cdot ]$ represents the expectation value with respect to $\mathsf{p}^{\text{l}}_w$ and $\mathsf{p}^{\text{nl}}_w$, respectively.

Let $b_1,b_2,\ldots ,b_n$ be binary random variables indicating bit-flip errors on each qubit, and define the error weight $w = \sum_i b_i $. We then have 
\begin{align*}
    &\mathbb{E}_\text{l}\!\left[w^2\right] \\
    &=\left( \sum_i \mathbb{E}_\text{l}[b_i] \right)^2 + \sum_i \mathbb{V}_\text{l}\left[b_i\right] + 2 \sum_{i<j} \text{cov}\!\left(b_i,b_j\right)\,, 
    \numberthis \label{eq:w2}
\end{align*}
where $\text{cov}(b_i,b_j) \equiv \mathbb{E}_\text{l}\!\left[\left(b_i - \mathbb{E}_\text{l}[b_i] \right)\left(b_j - \mathbb{E}_\text{l}[b_j]\right)\right]$ is covariance associated with random variables $b_i,b_j$. The correlations between errors are expected to be short-ranged, such that $\text{cov}\!\left(b_i,b_j\right) = \text{cov}\!\left(|i-j| \right)$, which has a finite correlation length and decays exponentially with distance $|i-j|$. Therefore, the first term in \eqref{eq:w2} will be the dominant term with order $O(n^2)$ and the remaining two terms will be $O(n)$. Therefore,
\begin{align*}
    m_\text{rms}^2 &=  \mathbb{E}[(n-2w)^2]  \approx n^2 \left[ (1-2\alpha \mathsf{p}^2)^2 + \frac{\mathsf{p}_\text{nl}}{2} \right]\,.
\end{align*}
Since below threshold $p_\text{th} = 1/144$ and in limit $n\rightarrow \infty $, $\mathsf{p}_\text{nl} \ll \mathsf{p}^2$, we can write 
\begin{equation}\label{eq:rms approx av}
    \frac{m_\text{rms}}{n} \approx \frac{\overline{m}}{n} \approx 1 - 2\alpha \mathsf{p}^2, 
\end{equation}
where the approximation error decays with $n$ when $p$ is sufficiently small. 
\eqref{eq:rms approx av} does not explicit assume the form of the error distribution, therefore, remains valid under general short-ranged correlated noise with finite correlation length. If we further assume that error occurs approximately independently across qubits so that the error-weight distribution is roughly binomial, i.e.,  $\mathsf{p}_w \simeq \mathcal{B}(n,w, \alpha \mathsf{p}^2)$. Then 
\begin{equation}\label{eq:rms binomial}
    \frac{m_\text{rms}}{n} = 1 - 2\alpha \mathsf{p}^2 + O\left( \frac{\mathsf{p}^2}{n}\right)\,,
\end{equation}
for small $\mathsf{p}$.

\vspace{0.1cm}

\noindent \textit{Numerics}.--- The left panel of \figaref{fig:state-prep-err} shows the normalized root mean square (rms) magnetization, $m_\text{rms}/n$. We observe a threshold behavior such that below the critical error rate $p^{(s)}_\text{th}$, it approaches $1 - 2 \alpha \mathsf{p}^2$. From finite-size scaling, we estimate the threshold to be
\begin{equation}
    p^{(s)}_\text{th} = 0.067\,.
\end{equation}
To explicitly demonstrate this behavior, we compute normalized rms magnetization number for varying system sizes $n$ below the threshold (see \figbref{fig:state-prep-err}). The data are fitted using the curve $1-\gamma-c/n^\nu$ with fitting parameters $(\gamma,c,\nu)$. Consequently, $1-m_\text{rms}/n\rightarrow \gamma$ as $n\rightarrow \infty$, where $\gamma$ exhibits quadratic scaling of order $\mathsf{p}^2$, as shown in \figcref{fig:state-prep-err}. The threshold behavior remains unchanged when $\mathsf{p}\not = \mathsf{q}$; a detailed analysis of this more general case is provided in Appendix\,\ref{appendix:state prep p not q}.

\section{Measurement}\label{sec:measurement}

After the interaction with the signal, the resultant logical repetition state is given by (in the ideal case of perfect state preparation)
\begin{equation}
    |\psi_\theta\rangle^\mathsf{R}_\mathsf{L} = \frac{1}{\sqrt{2}}\left(e^{-in\theta/2}|0\rangle^\mathsf{R}_\mathsf{L} +ie^{in\theta/2}|1\rangle^\mathsf{R}_\mathsf{L}\right)\,,
\end{equation}
for which we need to measure $X^\mathsf{R}_\mathsf{L}$ or $X^{\otimes n}$. To perform the measurement in a fault-tolerant manner, we measure the operators $X_1, X_2,\ldots ,X_n$ using sequentially applying CNOT gates acting on  ancilla qubits (initialized in $|+\rangle$ state) and data qubits and then measuring ancilla qubits in $X$-basis as shown in \figref{fig:meas-circuit}. After the measurement, the data qubits results in the product state $\bigotimes_j |x^{(j)}\rangle $ whose parity (\eqref{eq:def-parity}) has the probability distribution given in \eqref{eq:parity-dist-1}.

\begin{figure}
    \centering
    \includegraphics[width=0.95\linewidth]{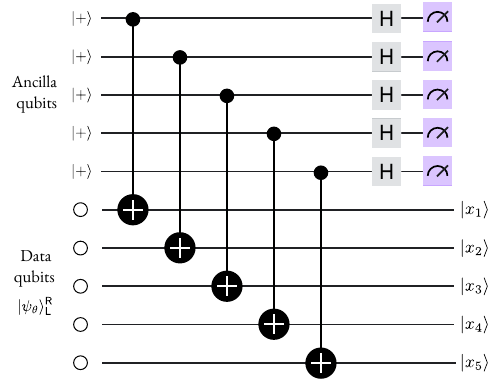}
    \caption{The quantum circuit for measuring the $X^\mathsf{R}_\mathsf{L}$ on the probe state. After the ancilla qubits are measured, the probe qubits collapse to the product state $\bigotimes_i |x_i\rangle $ with the probability distribution \eqref{eq:parity-dist-1}. }
    \label{fig:meas-circuit}
\end{figure}

The logical operator $X^\mathsf{R}_\mathsf{L}$ commutes with bit-flip errors and any $Z$-error does not traverse to data qubit, therefore, the data qubits are not affected by any qubit errors in the measurement circuit. Although the ancilla measurement result itself can flip due to either measurement or qubit error. Let us assume that $q_x$ is the error probability associated with measurement of any $X_i$ in a single measurement process. How many times should we repeat the measurement such that error probability associated with logical operator $X^\mathsf{R}_\mathsf{L}$ is sufficiently small? As shown in Appendix\,\ref{appendix:probe initialization error}, suppose we do $r$ rounds of measurement to obtain $x_i$, the probability the majority vote fails is 
\begin{align*}
    \sum_{k= \lceil r/2\rceil}^r \binom{r}{k} q_x^k(1-q_x)^{r-k} 
     &\leq 2^r q_x^{r/2+1}. 
\end{align*}
Then, the probability of failure $\wp_\text{meas}$ in computing the complete set of $X$-measurements is given by 
\begin{equation}
    \wp_\text{meas} \leq 1 - \left( 1- 2^r q_x^{r/2+1} \right)^n\,.
\end{equation}
We therefore identify a threshold for the logical $X_\mathsf{L}^\mathsf{R}$ measurement error rate,
\begin{equation}
    q_m^{(\text{th})} = \frac{1}{4}\,,
\end{equation}
below which the logical measurement error $\wp_\text{meas}$ can be made arbitrarily small by increasing the number repetitions.

When $q_x< q_m^\text{(th)}$, we can pick e.g. $r = -\frac{\ln({q_x^2n^4})}{\ln(4 q_x)}$, then 
\begin{equation}
\label{eq:step1 measurement}
\wp_\text{meas} \leq 1 - \left(1 - 1/ n^2\right)^{n},    
\end{equation}
which approaches $0$ for large $n$.

\begin{figure}[t]
    \centering
    \includegraphics[width=0.9\linewidth]{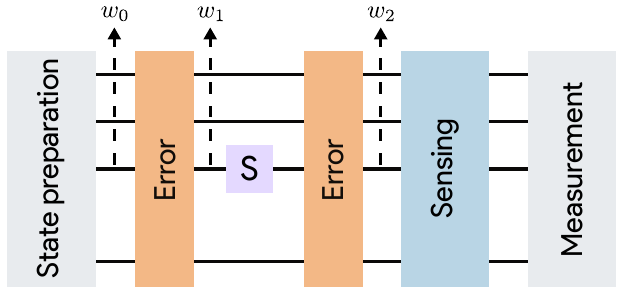}
    \caption{Schematic showing errors considered for the FI calculation. State preparation block represents the first part of state preparation considered in \secref{sec:state preparation}, which leaves some errors after the error correction. Further bit-flip errors can occur with probability $p$ on each qubit before and after the phase gate, which is applied uniformly randomly on one of the $n$ probe qubits. The measurement block corresponds to probe measurement (discussed in \secref{sec:measurement}) with error $\wp_\text{meas}$.}
    \label{fig:fi-calculation}
\end{figure}

\section{Performance of FI}\label{sec:performance-fi}

In this section, we will compute and plot the FI of the probability distribution associated with the final probe measurement. In \figref{fig:fi-calculation}, we show the overview of our protocol that we will use to calculate the FI. The state preparation involves the procedure discussed in \secref{sec:state preparation}, which leaves the residue of errors after the correction. There are further errors occurring before and after the phase gate, applied on the random qubit and during the wait step for the rest of the qubits. After the state undergoes sensing procedure, the error occur during the probe measurement procedure (discussed in \secref{sec:measurement}). Although as we discussed, the bit-flip error does not affect the measurement outcome, so the only contribution comes from the measurement errors.


Suppose that $P(w_0,w_1,w_2)$ represents the probability of error patterns, where $w_0, w_1$, and $w_2$ are weights of the bit-flip errors after the error correction step, before the phase gate, and after the phase gate, respectively. An important error event (referred to here as a phase error) is one which leaves an $X$-error before the phase gate, which flips the final parity outcome i.e. changing the final probability distribution from $\mathscr{P}_m({{x }};\theta)$ to $\mathscr{P}_m(-{{x }};\theta)$ for some $m$. Likewise, any error during the probe qubit initialization (i.e. when a qubit prepared in the $|+\rangle$ state is erroneously prepared in the $|-\rangle$) also flips the final parity outcome. Let us $\wp_\text{prep}$ is the probe preparation error. Since there are $w_1$ errors before the phase gate, the probability of a phase error is $p_s := w_1/n$. The final probability distribution of the probe measurement is given by 
\begin{align*}\label{eq:logical-dist}
    \mathscr{P}({{x }};\theta) &= \sum_{\substack{w_0 w_1w_2}} P(w_0,w_1,w_2) \left[(1-p_{\oplus})\cdot \mathscr{P}_{m_{w_2}}({{x }};\theta) \right. \\
    &\quad  \left. + p_\oplus\cdot \mathscr{P}_{m_{w_2}}(-{{x }};\theta) \right]\,, \numberthis 
\end{align*}
where $p_\oplus := \text{Pr(odd number of flips)}$ denotes the effective flip probability, i.e. the probability that the final probe measurement outcome differs from the true signal bit due to either a probe initialization error, an $X$-error before the phase gate, or a probe measurement error. More explicitly, 
\begin{equation}
    p_\oplus = \frac{1}{2}\left[1 - (1-2\wp_\text{prep})(1-2p_s)(1-2\wp_\text{meas}) \right]\,. 
\end{equation}
We want to find the FI for the probability distribution $\mathscr{P}({{x }};\theta)$ in \eqref{eq:logical-dist}. First, one can readily verify that 
\begin{multline}
  \left.  \frac{d}{d\theta}  \mathscr{P}({{x }};\theta)\right|_{\theta = 0} = \frac{{{x }}}{2}(1-2\wp_\text{meas})(1-2\wp_\text{prep}) \cdot \\ \sum_{\substack{w_0 w_1w_2}} P(w_0,w_1,w_2) \cdot m_{w_2} (1-2p_s) \,.
\end{multline}
Using 
\begin{equation}
    P(w_0,w_1,w_2) =  P(w_2|w_1) P(w_1|w_0) P(w_0)\,,
\end{equation}
where $P(w_i |w_{i-1})$ denotes the conditional probability of obtaining $w_i$ errors given the previous step had $w_{i-1}$ errors. We can write 
\begin{align*}
   & \sum_{\substack{w_0 w_1w_2}} P(w_0,w_1,w_2) \cdot m_{w_2} (1-2p_s) \\
   &= \frac{1}{n}\sum_{w_0} P(w_0)  \left[\sum_{w_1} (n-2w_1) P(w_1|w_0) \right] \\
   & \qquad \qquad \qquad \times \left[\sum_{w_2}(n-2w_2) P(w_2|w_1) \right]\,.
\end{align*}
The sum in the brackets can be calculated one by one
\begin{align*}
   & \sum_{w_2}(n-2w_2) P(w_2|w_1) \\
   &= (1-2p)(n-2w_1) \sum_{w_1} (n-2w_1)^2 P(w_1|w_0) \\
    &=  4np(1-p) + (1-2p)^2(n-2w_0)^2 \,,
\end{align*}
leading to 
\begin{align*}
     & \sum_{\substack{w_0 w_1w_2}} P(w_0,w_1,w_2) \cdot m_{w_2} (1-2p_s) \\
     &= \frac{(1-2p)}{n} \left[(1-2p)^2 \overline{m^2} + 4np(1-p) \right]\,. \numberthis
\end{align*}
The FI at $\theta = 0$ is given by 
\begin{align}
     &\mathcal{F}_\text{cl}^{(\rm QEC)} = \left[(1-2\wp_\text{meas})(1-2\wp_\text{prep})(1-2p) \right. \nonumber \\
     & \qquad \qquad \left. \left\{(1-2p)^2 \frac{\overline{m^2}}{n} + 4p(1-p)\right\} \right]^2\,, \label{eq:final-fi-0-exact} \\ 
    &= \left[(1-2\wp_\text{meas})(1-2\wp_\text{prep})(1-2p)^3\right]^2 \bigg(\frac{\overline{m^2}}{n}\bigg)^2 \nonumber \\
    & \qquad \qquad + O\bigg(\frac{\overline{m^2}}{n}\bigg)\,.
    \label{eq:final-fi-0}
\end{align}
For error rate below the state preparation threshold $p_\text{th}^{(s)}$, we can use \eqref{eq:rms approx av} to write 
\begin{align} 
       & \mathcal{F}_\text{cl}^{(\rm QEC)}  \\
       & \approx  \left[(1-2\wp_\text{meas})(1-2\wp_\text{prep})(1-2p) \right. \nonumber \\
       & \qquad  \left. \left\{((1-2p)(1-\alpha \mathsf{p}^2))^2n^2 + 4p(1-p)\right\} \right]^2\,, \label{eq:final-fi-exact} \\
        &\approx \left[(1-2\wp_\text{meas})(1-2\wp_\text{prep})(1-2p)^3\left(1-2\alpha \mathsf{p}^2\right)^2\right]^2 n^2\,. \label{eq:final-fi}
\end{align}

\eqref{eq:final-fi-0} and \eqref{eq:final-fi} demonstrate the attainability of the HL when the state preparation step is below error threshold (such that ${\overline{m^2}} = \Omega(n^2)$) and the measurement step is below the error threshold (such that $\wp_\text{meas}$ is below a constant). 

\begin{figure}
    \centering
    \includegraphics[width=0.90\linewidth]{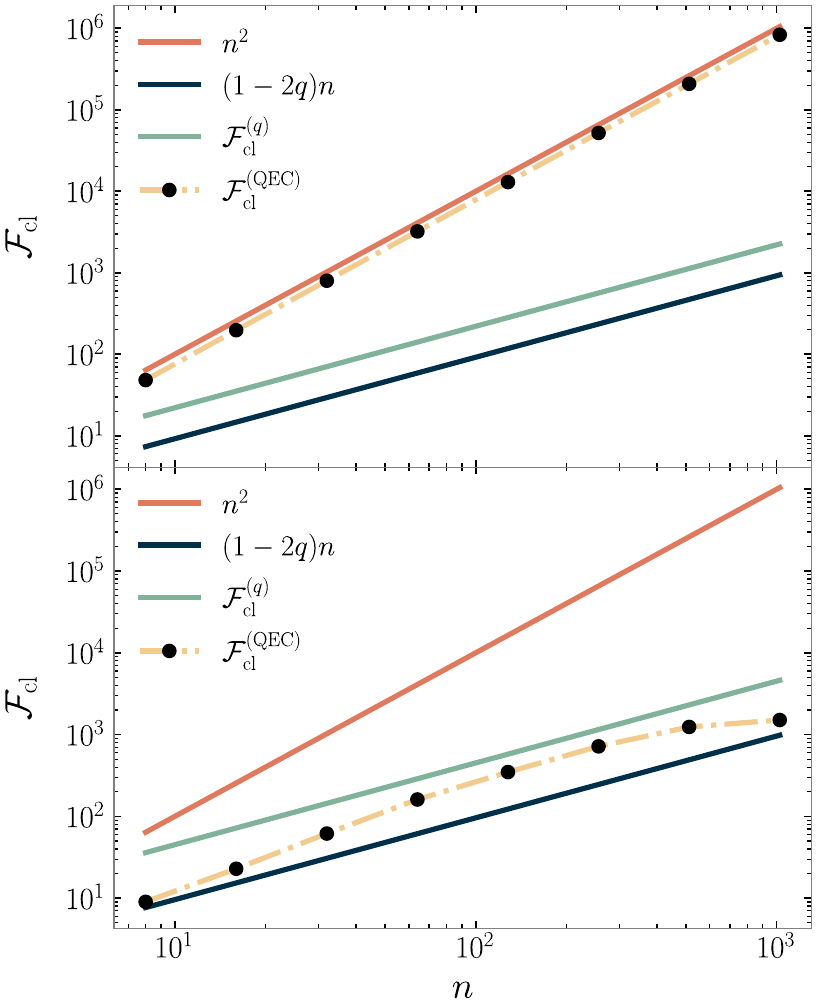}
    \caption{We plot the FI in different settings, including the noiseless case, the noisy case with our QEC protocol and the noisy case with $\ket{+}^{\otimes n}$ and $\ket{i\mathrm{GHZ}}^{\otimes n/k}$ as input states. The qubit and measurement error rates are chosen such that for \textbf{Top}: error rate is below both state preparation and measurement threshold (numerically, $\mathsf{p}=\mathsf{q}=0.04$ and $p=q=p_\text{prep} = 0.01$) and \textbf{Below}: error rate is below measurement but not state preparation threshold (numerically, $\mathsf{p}=\mathsf{q}=0.1$ and $p=q=p_\text{prep} = 0.02$). Below the threshold, our protocol $\mathcal{F}_\text{cl}^\text{(QEC)}$  (\eqref{eq:final-fi-0-exact}) asymptotically achieves the HL.}
    \label{fig:fisher-information} 
\end{figure}

We further compare our QEC protocol to other protocols in noisy settings without QEC. Firstly, consider the case where the probe is prepared in the product state $|+\rangle^{\otimes n}$. In noiseless setting, the associated FI scales as $n$, corresponding to SQL. If we consider the case of final probe state measurement error, then FI reduces to $(1-2q)n$. A more advanced protocol prepares $n/k$ copies of $k$-qubit $|i\text{GHZ}\rangle$ state which result in FI
\begin{equation}\label{eq:optimal FI measurement error}
   \mathcal{F}_\text{cl}^{(q)} = \frac{n}{k}\cdot (1-2q)^{2k} k^2,
\end{equation}
where $(1-2q)^{2k} k^2$ represents the FI of each $|i\text{GHZ}\rangle$ state. (Note that here we ignore the rounding error caused by the fact that $k$ and $n/k$ may not integers because our estimate of $\mathcal{F}_\text{cl}^{(q)}$ will only be larger than the one in practice.) The FI in \eqref{eq:optimal FI measurement error} can be optimized with respect to $k$, leading to 
\begin{equation}
    \mathcal{F}_\text{cl}^{(q)} = \frac{-n}{e\ln((1-2q)^2)} = \frac{n}{4qe} + O(1) \,,
\end{equation}
and occurs at $k= -1/\ln((1-2q)^2)$.

\vspace{0.1cm}

\noindent \textit{Numerics}.--- In \figref{fig:fisher-information}, we show the numerically calculated FI for our protocol. We consider an $n$-qubit probe state and error rates are chosen such that the state preparation and measurement errors are both below the threshold. In this regime, the FI asymptotically achieves  the HL with logarithmic overhead in circuit depth due to $O(\ln n)$ repeated measurements during state preparation and probe measurement. 

\section{Conclusion and Outlook}\label{sec:conclusion}

In this work, we show that quantum metrology admits a fully fault-tolerant regime. We use repetition code, and explicitly accounting for noise not only in the probes but also in the error-correction operations and measurements. We established the existence of a threshold error rates below which the HL can be asymptotically recovered. Our protocol achieves this with only logarithmic overhead in circuit depth, showing that high-precision quantum sensing remains possible even under realistic faulty operations.


The signal Hamiltonian (Pauli-Z) considered in this work, is relevant to practical cases such as optimal interferometry or spin ensembles and our results demonstrate the fault-tolerance of Pauli-Z estimation against bit-flip noise. When dephasing noise is present, however, no-go results forbid the achievability of the HL. One future direction is to consider the non-asymptotic advantage of QEC in metrology in a noise-intermediate regime where the qubit number is limited and a moderate-level dephasing noise exists. Another future direction is to consider other other sensing scenarios, e.g., many-body Hamiltonian estimation where single-qubit errors might be tolerable and precision beyond the HL is achievable, referred to as super-Heisenberg limit~\cite{PhysRevLett.100.220501,PhysRevLett.98.090401,Napolitano2011}.

\begin{acknowledgments}

S.Z. would like to thank Sophie Kadan for helpful discussions.
H.S. and S.Z. acknowledge support from National Research Council of Canada (Grant No.~AQC-217-1) and Perimeter Institute for Theoretical Physics, a research institute supported in part by the Government of Canada through the Department of Innovation, Science and Economic Development Canada and by the Province of Ontario through the Ministry of Colleges and Universities. 
Q.X. is funded in part by the Walter Burke Institute for Theoretical Physics at Caltech.
\end{acknowledgments}

\vspace{0.1cm}
\noindent {\textbf{\textsf{Data Availability}}}
\vspace{0.1cm}

\noindent The data that support the findings of this paper are openly available.

\appendix
\section{Summary of error parameters and notation}\label{appendix:notation}

\renewcommand{\arraystretch}{1.25}
\begin{table}
    \centering
    \caption{\label{table:notations}Summary of error parameters and notation used throughout the paper.}
    \rowcolors{-1}{}{gray!10}
    \begin{tabularx}{\columnwidth}{c X}
        \toprule
        \textbf{Symbol} & \textbf{Meaning} \\
        \midrule
        $p$ & Physical Pauli-$X$ error per circuit location \\
        $q$ & Single-qubit measurement outcome error \\
        $p_{\mathrm{prep}}$ & Single-qubit initialization error \\
        $\mathsf{p}$ & Effective data-qubit Pauli-$X$ error accumulated during a syndrome measurement round \\
        $\mathsf{q}$ & Effective syndrome measurement error probability \\
        $q_x$ & Error probability of an ancilla-assisted, non-destructive Pauli-$X$ measurement \\
        $\wp_{\mathrm{prep}}$ & Residual logical preparation error of the probe state prepared from 
        \(|+\rangle^{\otimes n}\) via repeated ancilla-assisted non-destructive measurements \\
        $\wp_{\mathrm{meas}}$ & Logical measurement error probability for the logical operator 
        \(X^{\mathsf{R}}_{\mathsf{L}}\) acting on the probe state \\
        $p_s^{(\text{th})}$ & Threshold effective bit-flip error rate $\mathsf{p}$, below which the normalized rms magnetization satisfies \eqref{eq:rms approx av} \\
        $q_m^{(\text{th})}$ & Threshold Pauli-$X$ measurement error rate $q_x$, below which the logical measurement error $\wp_{\mathrm{meas}}$ can be made arbitrarily small by repeated non-destructive measurements \\
        \bottomrule
    \end{tabularx}
\end{table}

In this section, we summarize the notation and error parameters used throughout the paper. In Table\,\ref{table:notations} lists all physical and effective error probabilities appearing in the main text and appendices, together with their physical interpretations. This summary is provided for ease of reference and to clarify the distinction between circuit-level error rates and the corresponding phenomenological parameters used in our analysis.

\section{Syndrome measurement error mechanism}\label{appendix:syndrome measurement}

In this section, we analyze the error mechanism arising in the syndrome measurement procedure. In particular, we identify how physical Pauli-$X$ faults and SPAM error at the circuit level give rise to effective syndrome and bit-flip errors, and justify the phenomenological syndrome error model used in the main text. 

\begin{figure}
    \centering
    \includegraphics[width=0.75\linewidth]{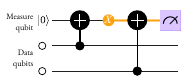}
    \caption{Illustrate of a single-fault error propagation in the syndrome measurement circuit. Orange lines highlight how a Pauli-$X$ error propagates through the circuit and leads to an incorrect syndrome measurement by inducing an $X$-error on the measure qubit before readout.}
    \label{fig:error-propagation}
\end{figure}

At first, we note that no $Z$-error are present on the measure qubit, and any $Z$-error may present on a data qubit does not propagate to the measure qubit. Similarly, an $X$-error on the measure qubit does not propagate to the data qubits. As a result, the syndrome measurement circuit does not generate hook errors.

Effective syndrome error $\mathsf{q}$ arise from error events that lead to an incorrect measurement outcome, equivalently from processes that induce an $X$-error on the measurement qubit immediately prior to readout or a readout error (but not both). In \figref{fig:error-propagation}, we illustrate one such mechanism that flips the syndrome measurement value. Summing all physical error contributions---namely Pauli-$X$ errors and SPAM errors on the measure qubit---we obtain
\begin{equation}
\mathsf{q} = p_\text{prep} + 3p + q + \text{h.o} \,,  
\end{equation}
where $+\text{h.o.}$ means terms of hight than linear order in the physical error probabilities.

To calculate effective qubit error $\mathsf{p}$, we sum all the physical error leading to syndrome detection, leading to
\begin{equation}
\mathsf{p} =  3p + \text{h.o.} \,.    
\end{equation}
Lastly, we note that in the syndrome-extraction circuit, error occurring on the measure qubit do not propagate to the data qubits. As a result, error mechanism leading to incorrect syndrome outcomes are independent from those inducing Pauli-$X$ error on the data qubits.

\begin{lemma}[Independence of data and syndrome errors]
In the syndrome-extraction circuit under the noise model defined in Def.\,\ref{def:noise model}, the error mechanisms giving rise to syndrome errors and to bit-flip errors on the data qubits are independent.    
\end{lemma}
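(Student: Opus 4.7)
The plan is to enumerate the physical fault locations in one round of $Z$-type syndrome extraction and show that the faults contributing to syndrome-outcome flips are supported on a set of hardware locations disjoint from those contributing to data-qubit bit flips. Since Def.~\ref{def:noise model} postulates that faults at distinct locations are mutually independent, disjointness of supports will immediately yield the statistical independence asserted by the lemma.

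First I would list the relevant fault locations in the syndrome extraction circuit of \figbref{fig:stabilizer-codes}: preparation of the measure qubit in $|0\rangle$ (flip with probability $p_{\mathrm{prep}}$), gate-induced Pauli-$X$ errors on control or target after each CNOT (each with probability $p$), idle-step $X$ errors on the measure and data qubits (probability $p$), and the readout flip on the measure qubit (probability $q$). Next I would propagate each single fault forward through the circuit using the CNOT identities $(X\otimes I)\,\mathrm{CNOT}=\mathrm{CNOT}\,(X\otimes X)$ and $(I\otimes X)\,\mathrm{CNOT}=\mathrm{CNOT}\,(I\otimes X)$. The second identity is the decisive input: an $X$ on the \emph{target} of a CNOT never propagates backward onto the control. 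Applied to the syndrome-extraction circuit, this implies that every fault originating on the measure qubit---prep flip, idle $X$, a gate-induced $X$ on the target line, or the readout flip---remains confined to the measure qubit and only flips the reported syndrome bit, without ever touching a data qubit. These contributions are precisely the ones collected in $\mathsf{q}=p_{\mathrm{prep}}+3p+q+\text{h.o.}$

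The remaining faults---idle and gate-induced $X$ errors on the data-qubit lines---live entirely on the data qubits and account for the effective bit-flip rate $\mathsf{p}=3p+\text{h.o.}$ Hence the support of the ``syndrome-flip'' channel and the support of the ``data-qubit flip'' channel are disjoint subsets of circuit locations, and the per-location independence assumption in Def.~\ref{def:noise model} will force the two error random variables to factorize.

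The main obstacle I expect is the bookkeeping for data-qubit $X$ errors that occur \emph{before} their CNOT with the measure qubit: these propagate forward through the control leg and therefore also flip the reported syndrome bit, which superficially looks like a common fault source contributing to both channels. My plan is to argue that this propagation is not an independent source of syndrome flipping but rather the intended detection mechanism---it is already encoded as a space-like edge in the decoder graph of \figref{fig:rep-matching}, correlating the data error with its induced syndrome signature in a way the matching model already accounts for, rather than as an additional time-like syndrome-error edge. Once this potential double-counting is handled explicitly by attributing each such event to the data-error channel alone, the factorization of the joint distribution over the syndrome-error and data-error random variables follows directly from the independence of physical faults in Def.~\ref{def:noise model}.
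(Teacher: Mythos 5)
Your proposal is correct and follows essentially the same route as the paper's (one-sentence) proof: the measure qubit is always the CNOT target, Pauli-$X$ does not propagate from target to control, so measure-qubit faults stay confined to the syndrome bit while data-qubit faults account for $\mathsf{p}$, and per-location independence in Def.~\ref{def:noise model} does the rest. Your explicit handling of data-qubit errors that propagate forward into the reported syndrome---attributing them to the data-error channel as space-like edges rather than as syndrome errors---is a subtlety the paper's proof leaves implicit, and your resolution is the right one.
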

\begin{proof}
    The measure qubit interacts with data qubits only through CNOT that do not propagate Pauli-$X$ errors from the measure to the data qubit. Consequently, physical Pauli-$X$ and SPAM error on the measure qubit only modify the measure syndrome value and do not affect the data qubits.
\end{proof}

\section{Probe initialization error}\label{appendix:probe initialization error}

In probe state preparation step, we begin with preparing the product state $|+\rangle^{\otimes n}$. However, as a result of state preparation error, a state is wrongly prepared in $|-\rangle$ state with probability $p_\text{prep}$. These state preparation error can be thought as $Z$-error, therefore, an extensive amount of these would drastically decay the scaling of FI. 

\begin{figure}
    \centering
    \includegraphics[width=0.75\linewidth]{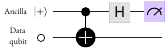}
    \caption{Quantum circuit for measuring Pauli-$X$ operator on data qubit using an ancilla. }
    \label{fig:measure-X}
\end{figure}

To increase the state preparation fidelity, we repeatedly measure Pauli-$X$ operator on each probe qubit using the quantum circuit in \figref{fig:measure-X}. The error associated with single Pauli-$X$ measurement is given by: 
\begin{equation}
    q_{x} = p_\text{prep}(1-q) + (1-p_\text{prep}) q \approx p_\text{prep} + q\,.
\end{equation}
Crucially, such a circuit does not propagate any $Z$-error (associated with ancilla state preparation) into probe qubit, while any $X$-error does not affect the probe qubit. Let us suppose we perform $r$ such measurements and perform the majority voting to determine the state. Then, the failure probability given by: 
\begin{align*}
    \sum_{k= \lceil r/2\rceil}^r \binom{r}{k} q_x^k(1-q_x)^{r-k} &\leq  \sum_{k= \lceil r/2\rceil}^r \binom{r}{\lceil r/2\rceil}  q_x^k \,, \\
     &= 2^r q_x^{\lceil r/2\rceil}\,, \\
     &\leq 2^r q_x^{r/2+1}. 
\end{align*}
Then, the probability of failure $\wp_\text{prep}$ in computing the complete set of $X$-measurements given by 
\begin{equation}
    \wp_\text{prep} \leq 1 - \left( 1- 2^r q_x^{r/2+1} \right)^n\,.
\end{equation}
When $q_x<1/4$, we can pick e.g. $r = -\frac{\ln({q_x^2n^4})}{\ln(4 q_x)}$, then 
\begin{equation}
\label{eq:step1 measurement}
\wp_\text{prep} \leq 1 - \left(1 - 1/ n^2\right)^{n},    
\end{equation}
which approaches $0$ for large $n$.

\begin{figure}
    \centering
    \includegraphics[width=0.49\linewidth]{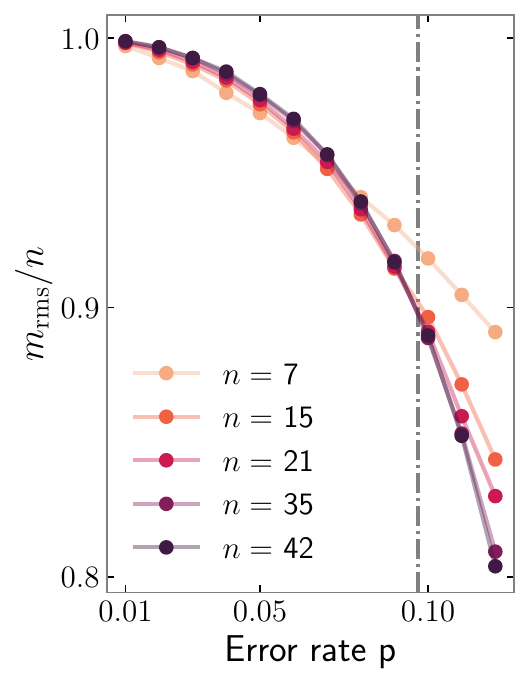}
    \includegraphics[width=0.49\linewidth]{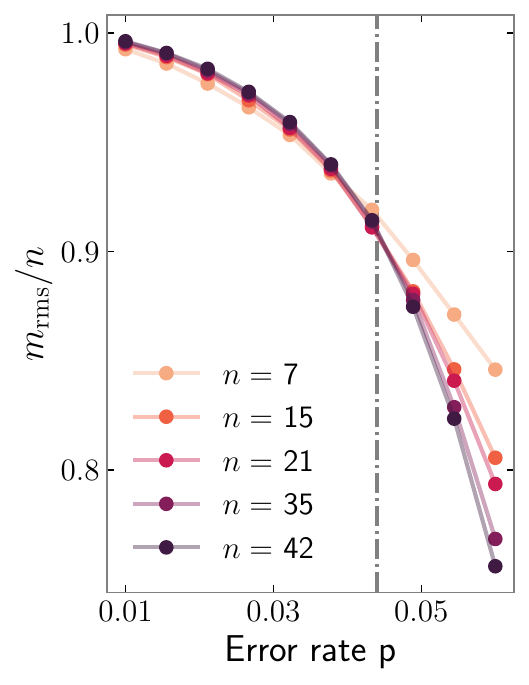}
    \caption{The ratio $m_\text{rms}/n$ as a function of qubit error rate $\mathsf{p}$ for (a) $\mathsf{q}=\mathsf{p}/2$ (b) $\mathsf{q}=2\mathsf{p}$ for varying system-sizes $n$. The vertical dashed-dot line shows the threshold probability $p^{(s)}_\text{th}$ with calculated values $0.092$ and 0.051, respectively.}
    \label{fig:state-prep-p-not-q}
\end{figure}

\section{State preparation when \texorpdfstring{$\mathsf{p} \not =\mathsf{q}$}{TEXT}} \label{appendix:state prep p not q}

We consider the state preparation in case where the qubit and measurement error rate are not equal to each other and show the existence of threshold for ratio $m_\text{rms}/n$.

We do the similar qualitative analysis to understand when does the matching algorithm fails. Consider the same case as in discussed in the main text shown in \figaref{fig:decoding_state_prep} where two consecutive defects are produced. This could result from either a single-qubit error which occurs with probability $\mathsf{p}$ or two consecutive measurement errors which occurs with probability $\mathsf{q}^2$. The decoder chooses 1-chain with probability $\max\,\{\mathsf{p},\mathsf{q}^2\}$ so that it fails with probability $\mathsf{p}_\star \equiv \min\,\{\mathsf{p},\mathsf{q}^2\}$. The error mechanism leads to single qubit bit-flip error that scale with system-size remain same, show in \,\figbref{fig:decoding_state_prep}. At small error rate $p$, the rms magnetization can be written analogous to  \eqref{eq:rms approx av} as
\begin{equation}
    \frac{m_\text{rms}}{n} \approx  1- 2\alpha \mathsf{p}_\star \,,
\end{equation}
for some constant $\alpha$. In \,\figref{fig:state-prep-p-not-q}, we show the calculated ratio for (a) $\mathsf{q}=\mathsf{p}/2$ and (b) $\mathsf{q}=2\mathsf{p}$, with qubit error $\mathsf{p}$ for varying system-sizes $n$. In both cases, we find a threshold probability $p^{(s)}_\text{th}$ below which the ratio obeys $(1-\gamma \mathsf{p}^2)$ (not explicitly shown in figure) for some constant $\gamma$. 

\begin{figure}
    \centering
    \includegraphics[width=0.9\linewidth]{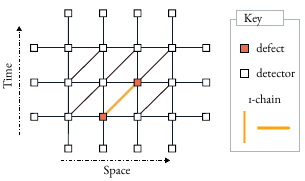}
    \caption{Detector error graph for the noise model defined in Def.\,\ref{def:correlated error model}, shown for a distance-5 repetition qubit with four rounds of syndrome measurements. Vertical and horizontal edges corresponds to single-location faults, while diagonal edges represents correlated post-CNOT errors (i.e., $X\otimes X$), which induce correlated detection events across space and time.}
    \label{fig:correlated-dem}
\end{figure}

\begin{figure}
    \centering
    \includegraphics[width=0.8\linewidth]{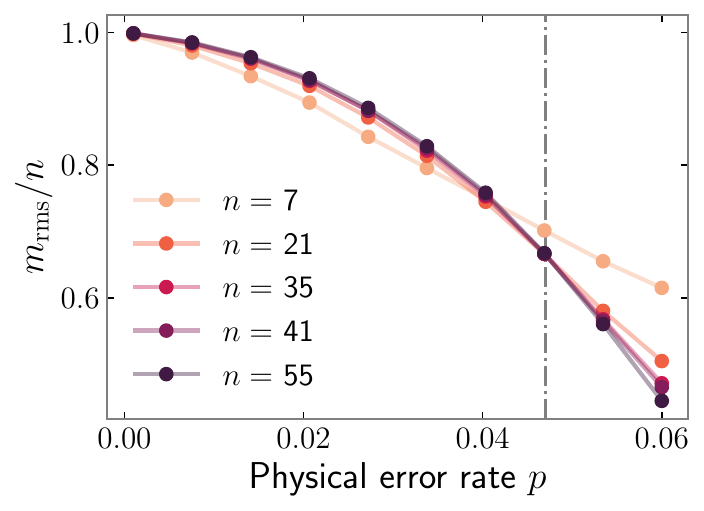}
    \caption{The ratio $m_\text{rms}/n$ as a function of physical error rate $p=q=p_\text{prep} = 5p_\text{CNOT}$ calculated for correlated circuit-level noise model in Def.\,\ref{def:correlated error model}, where $\lceil 5 \ln n\rceil$ rounds of repeated syndrome measurements are performed. The vertical dashed-dot line shows the threshold probability $p_\text{th}^{(s)}$ with calculated value $0.47$ obtained by calculating crossover at finite system-size.}
    \label{fig:msp-correlated}
\end{figure}
\section{State preparation under correlated errors}\label{appendix: correlated error}

In this section, we consider the state preparation under a more generic bit-flip noise model defined as:
\begin{definition}\label{def:correlated error model}
We consider a circuit-level noise model in which faults may occur at any physical location, including state preparation, gate, idle, and measurement operations. The noise processes are defined as follows:
\begin{itemize}
\item \textbf{State preparation.} A physical qubit initialized in the state $|0\rangle$ (or $|+\rangle$) is instead prepared in $|1\rangle$ (or $|-\rangle$) with probability $p_\text{prep}$.
\item \textbf{Measurement.} Each single-qubit measurement is a projective measurement in the computational ($Z$) basis, with the reported eigenvalue flipped with probability $q$.
\item \textbf{Gate and idle errors.} Each physical single-qubit gate and idle (wait) operation is followed by a Pauli-$X$ error with probability $p$. After each CNOT gate, an error drawn from the set $\{I\otimes X, X\otimes I, X\otimes X\}$ occurs with probability $p_\text{CNOT}$.
\end{itemize}
All error events are assumed to occur independently unless otherwise stated.
\end{definition}
A correlated post-CNOT error (i.e., an $X\otimes X$ error) results in correlated syndrome and Pauli-$X$ error on data qubit. In the detector graph, such a error corresponds to an edge connecting diagonally opposite detectors at time $t$ and $t+1$ as shown in \figref{fig:correlated-dem}. To leading order, any correlated errors that are not correctly identified by the decoder in the final round of syndrome measurements leads to effective single-qubit bit-flip errors whose total contribution scales extensively with the system size. As a result, for sufficiently small physical error rates, there exists a constant $\epsilon$ such that
\begin{equation}
    \frac{m_\text{rms}}{n} \approx 1-\epsilon\,,
\end{equation}
where $\epsilon$ is larger than in the case without correlated errors, reflecting the increased rate of effective single-qubit errors.

Furthermore, the presence of such short-range correlated errors does not modify the qualitative structure of the error distribution or introduce long-range correlations. Therefore, our analysis of nonlocal error suppression continues to hold.

In \figref{fig:msp-correlated}, we show the calculated ratio $m_\text{rms}/n$ for the circuit-level noise model with $p_\text{prep} = p = q = 5p_\text{CNOT}$ for varying system-sizes $n$. We find the threshold probability $p_\text{th}^{(s)}$ below which the ratio approaches a constant value. As a consequence, the HL scaling of the FI is preserved even in the presence of these correlated circuit-level errors.


\bibliographystyle{apsrev4-1}
\bibliography{biblio} 

\end{document}